\documentclass[11pt,oneside]{amsart}
\usepackage{amsthm}
\usepackage{setspace}
\usepackage{amssymb}
\usepackage{esint}
\usepackage{mathrsfs}
\onehalfspacing

\numberwithin{equation}{section} 
\numberwithin{figure}{section} 
\theoremstyle{plain}
\newtheorem{thm}{Theorem}[section]
  \theoremstyle{remark}
  
  \theoremstyle{plain}
  \newtheorem*{acknowledgement*}{Acknowledgement}
  \theoremstyle{plain}
  \newtheorem{lem}[thm]{Lemma}
  \theoremstyle{plain}
  \newtheorem{prop}[thm]{Proposition}
  \theoremstyle{definition}
  \newtheorem*{problem*}{Problem}
  \theoremstyle{plain}
  \newtheorem*{thm*}{Theorem}
  
  \theoremstyle{plain}

\begin{document}
\author{Jean-Ren\'{e} Chazottes}
\address{Centre de Physique Théorique\\ \'{E}cole Polytechnique\\ 91128 Palaiseau Cedex\\ France}
\email{jeanrene@cpht.polytechnique.fr}
\author{Michael Hochman}
\address{Department of Mathematics\\ Fine Hall, Washington Rd.\\ Princeton NJ 08540\\ USA.}
\email{hochman@math.princeton.edu}

\title{On the zero-temperature limit of Gibbs states}
\begin{abstract}
We exhibit Lipschitz (and hence H\"{o}lder) potentials on the full shift $\{0,1\}^{\mathbb{N}}$
such that  the associated Gibbs measures fail to converge as the temperature
goes to zero.
Thus there are ``exponentially decaying'' interactions
on the configuration space $\{0,1\}^{\mathbb Z}$ for which the
zero-temperature limit of the associated Gibbs measures
does not exist. 
In higher dimension, namely on the configuration space
$\{0,1\}^{\mathbb{Z}^{d}}$, $d\geq3$, we show that this
non-convergence behavior can occur for the equilibrium states of finite-range
interactions, that is, for locally constant potentials.
\end{abstract}
\maketitle

\section{\label{sec:Introduction}Introduction}

\subsection{Background}

The central problem in equilibrium statistical mechanics or thermodynamic
formalism is the description of families of Gibbs states for a given interaction.
Their members are parametrized by inverse temperature, magnetic field,
chemical potential, etc. The ultimate goal is then to describe the set
of Gibbs states as a function of these parameters. The zero temperature
limit is especially interesting since it is connected to ``ground states'', that
is, probability measures supported on configurations with minimal
specific energy \cite{vanEnterFernandezSokal93}.

The purpose of this article is to shed some light on the zero-temperature
limit in the case of classical lattice systems, that is, systems with
a configuration space of the form $F^{\mathbb Z^d}$, where $F$ is
a finite set. We consider shift invariant, summable interactions
$\Phi=(\Phi_B)_{B\subseteq\mathbb{Z}^{d},|B|<\infty}$.
For every $\beta>0$, we denote by $\mathcal{G}(\beta \Phi)$
the (nonempty) set of Gibbs states of $\Phi$ at inverse temperature
$\beta$.
It contains at least one shift-invariant Gibbs state \cite{georgii}.
The question we are interested in is:
\[
\textrm{What is the limiting behavior of}\;\, \mathcal{G}(\beta
\Phi)\;\textrm{as}\;\beta\to+\infty\,?
\]
When $\mathcal{G}(\beta \Phi)$ is a singletom, we denote
by $\mu_{\beta\Phi}$ its unique (and necessarily shift-invariant)
element. Then the previous question becomes:
\[
\textrm{Does the limit of}\;\,
\mu_{\beta\Phi}\,\,\textrm{exist},\;\textrm{as}\;\beta\to+\infty ?
\]
(Limits of measures should be understood in the weak-* sense).
For the class of interactions we consider, shift-invariant
Gibbs states are also equilibrium states. To define them,
we need to introduce the function
\[
\varphi(x):=\sum_{B\ni0}\frac{1}{|B|}\ \Phi_{B}(x).
\]
This function can be physically interpreted as the contribution of the 
lattice site $0$ to the energy in the
configuration $x$. (\footnote{Since the interaction is shift-invariant, we 
can take any lattice site. Other
definitions are possible \cite[section 3.2]{Ruelle04}, but all lead to the 
same expected value under a given shift-invariant measure.})

Equilibrium states at inverse temperature $\beta>0$ are then
shift-invariant measures which maximize the quantity
\begin{equation}
P_\beta(\nu) := \int \beta\varphi d\nu + h(\nu) 
\label{eq:variational-functional}
\end{equation}
over all shift-invariant probability measures $\nu$ on $F^{\mathbb Z^d}$. 
Here $h(\nu)$ is the Kolmogorov-Sinai entropy of $\nu$, and the supremum 
is called the (topological) pressure. When the lattice is one dimensional 
and the potential $\varphi$ is H\"{o}lder, there is a unique Gibbs measure 
which is also the unique equilibrium measure. For $d>1$ the notions 
generally are not equivalent and there may be multiple Gibbs states, even 
for finite-range
interactions, but any shift-invariant Gibbs measure is an equilibrium 
state.

As in \cite{vanEnterFernandezSokal93}, we define
zero-temperature equilibrium states as those shift-invariant
probability measures which maximize \footnote{Or minimizes, depending on 
the sign convention for $\varphi$.}
$\int \varphi d\nu$ among all shift-invariant measures $\nu$.
It can be proven that the weak-$^*$ accumulation points
of equilibrium states for a given interaction as $\beta\to+\infty$
are necessarily zero-temperature equilibrium states for that
interaction. Zero-temperature equilibrium states are related to
ground states (see \cite{vanEnterFernandezSokal93} for details).

\subsection{The one-dimensional case}

Let us make a few remarks about the ergodic perspective. Fix the usual 
metric
\[
d(x,y)=2^{-\max \{k\;:\; x_i= y_i\ \forall |i|\leq k\}}
\]
on $F^\mathbb{Z}$. For a number of reasons, the usual class of 
``potentials'' $\varphi:F^\mathbb{Z}\to\mathbb{R}$ which are studied are 
H\"{o}lder continuous ones. First, for these potentials the Gibbs measure 
$\mu_{\beta\varphi}$
is unique for each $\beta>0$ (no phase transition). Second, this class of 
potentials arises naturally in the theory of differentiable dynamical 
systems (e.g. Axiom A diffeomorphisms): By choosing a suitable Markov 
partition of the phase space one can code such a diffeomorphism to a 
subshift of finite type in $F^\mathbb{Z}$ \cite{bowenbook}, and under this 
coding smooth potentials lift to H\"{o}lder ones. And third, H\"{o}lder 
potentials correspond to the natural objects in statistical mechanics, 
namely ``exponentially decaying''
interactions $(\Phi_B)$ \cite[chapter 5]{Ruelle04}. We also note that the 
case when $\varphi$ is locally constant corresponds to interactions of 
finite range; see below.

There is a trick, due to Sinai, which allows one to reduce the study to a ``one-sided'' subshift of finite type
of $F^{\mathbb N}$ and a potential $\varphi$ which depends only on ``future'' 
coordinates \cite{bowenbook}. Thus it suffices to study the one-sided full shift, and our question can be formulated as follows: \[
\textrm{For H\"{o}lder continuous }\varphi\textrm{ on $F^\mathbb{N}$, when does }\lim_{\beta\to+\infty}\mu_{\beta\varphi}\textrm{ exist?}
\]

The existence of the zero-temperature limit has been verified in a number of situations, but, surprisingly, a systematic study of this question began only recently. When $d=1$ and $\varphi$ is locally constant (i.e. the interaction is finite range), the zero-temperature limit was proved to exist in \cite{Bremont03} and was described explicitly in \cite{Leplaideur05,ChazottesGambaudoUgalde09}. In this case, the zero-temperature limit is supported on the union of finitely many transitive subshifts of finite type and is a convex combination of the entropy-maximizing measure on them. The case $d=1$ with $F$ a countable set was studied in \cite{JenkinsonMauldinUrbanski05,morris}. 

Another class of examples where convergence may be verified arises as follows. Let $X\subseteq F^{\mathbb{N}}$ be a subshift (a closed non-empty shift-invariant set) and define $\varphi=\varphi_{X}$ by \[
\varphi(y)=-d(y,X)=-\inf\{d(y,x)\,:\, x\in X\}.
\]
This is a Lipschitz function on $F^{\mathbb{N}}$ with $\varphi|_{X}=0$ and $\varphi\leq0$. The ground states of $\varphi_{X}$ are then precisely the measures supported on $X$, and it follows that all accumulation points of
$\{\mu_{\beta\varphi}, \beta>0\}$ are invariant measures supported on $X$. In particular, when $X$ has only one invariant measure $\mu$ (i.e. is uniquely ergodic), all accumulation points coincide, and we have $\mu_{\beta\varphi}\rightarrow\mu$ as $\beta\rightarrow+\infty$.

The only example of non-convergence of which we are aware is by van Enter and Ruszel \cite{vanEnterRuszel07}. The example is of a nearest-neighbor potential model, but is defined over a continuous state space $F$ (the circle). 

This state of affairs has led to the belief that over finite state spaces convergence should generally hold. Our first result is a counterexample, showing that this is not the case:

\begin{thm}
\label{thm:main-theorem}
There exist subshifts $X\subseteq\{0,1\}^{\mathbb{N}}$
so that, for the Lipschitz potential $\varphi_{X}(y)=-d(y,X)$, the
family $\{\mu_{\beta\varphi}, \beta>0\}$ does not converge (weak-{*}) as $\beta\rightarrow+\infty$. 
\end{thm}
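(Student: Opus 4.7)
The plan is to exhibit a subshift $X$ supporting two distinct shift-invariant measures $\nu_{0},\nu_{1}$ and to tune the combinatorial structure of $X$ near them so that $\{\mu_{\beta\varphi_{X}}\}$ oscillates in the weak-$*$ topology between $\nu_{0}$ and $\nu_{1}$ as $\beta\to\infty$. Since every weak-$*$ accumulation point of $\mu_{\beta\varphi_{X}}$ must be a shift-invariant measure supported on $X$, showing that both $\nu_{0}$ and $\nu_{1}$ occur as accumulation points already establishes the theorem. Concretely I would fix two disjoint uniquely ergodic subshifts $Y_{0},Y_{1}$ (say Sturmians of different irrational rotation numbers), with unique invariant measures $\nu_{0},\nu_{1}$, and construct $X\supseteq Y_{0}\cup Y_{1}$ by adjoining additional words at a carefully chosen sequence of length scales.

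The mechanism for temperature-dependent selection is the following. For any cylinder $[w]$ of length $N$ whose word $w$ lies in the language of $X$, one has $\varphi_{X}|_{[w]}\geq -2^{-N}$, so the Gibbs weight on $[w]$ is essentially unaffected by $\varphi_{X}$ as long as $\beta\cdot 2^{-N}\ll 1$. Consequently $\mu_{\beta\varphi_{X}}$ should be weak-$*$ close to the measure of maximal entropy of the subshift of finite type $X_{N}$ that forbids exactly the $N$-words absent from $X$, provided $N\sim\log_{2}\beta$. The task therefore reduces to constructing $X=\bigcap_{N}X_{N}$ so that the MMEs of the approximating $X_{N}$ swing back and forth between $\nu_{0}$ and $\nu_{1}$ along some sequence of scales $N_{1}<N_{2}<\cdots$.

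I would engineer this by adding, at each scale $N_{k}$, a large reservoir of $Y_{k\bmod 2}$-typical words of length $N_{k}$ to the language of $X$, chosen numerous enough (of cardinality $\sim\exp(N_{k}\,h(\nu_{k\bmod 2}))$) to dominate the $N_{k}$-word count of $X_{N_{k}}$, so that its MME lies within $1/k$ of $\nu_{k\bmod 2}$ in a fixed weak-$*$ metric. As $\beta$ crosses each scale $2^{N_{k}}$, the dominant measure under $\mu_{\beta\varphi_{X}}$ would switch, producing both $\nu_{0}$ and $\nu_{1}$ as accumulation points along the resulting two subsequences of inverse temperatures.

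The principal difficulty is that the reservoirs introduced at scales $N_{k+1},N_{k+2},\ldots$ modify $\varphi_{X}$ globally and could disturb the control achieved at scale $N_{k}$. However, such modifications only affect $\varphi_{X}$ on cylinders of length at least $N_{k+1}$, whose contribution to the Gibbs weight at $\beta\sim 2^{N_{k}}$ is suppressed by a factor of order $\exp(-\beta\cdot 2^{-N_{k+1}})$; hence standard pressure-perturbation bounds, exploiting the Lipschitz regularity of $\varphi_{X}$, should absorb the interference provided the gaps $N_{k+1}-N_{k}$ grow fast enough. The technical heart of the argument is a careful inductive specification of the scales $N_{k}$ and the reservoir sizes, ensuring that the approximation $\mu_{\beta\varphi_{X}}\approx\nu_{k\bmod 2}$ is progressively sharpened while all past approximations are preserved.
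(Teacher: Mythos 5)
Your overall strategy is the same as the paper's: match the inverse temperature $\beta$ to a length scale $N\sim\log_2\beta$, argue that $\mu_{\beta\varphi_X}$ essentially maximizes entropy among measures concentrated on globally admissible $N$-words, and engineer the $N$-word counts of $X$ so that the entropy maximizer at scale $N_k$ alternates between two well-separated regions of the simplex. But there are genuine gaps in the mechanism you propose for producing the oscillation. First, with Sturmian $Y_0,Y_1$ the measures $\nu_0,\nu_1$ have zero entropy, so your reservoirs of cardinality $\sim\exp(N_kh(\nu_{k\bmod2}))$ have size $O(1)$ and cannot dominate any word count; if instead you take positive-entropy uniquely ergodic $Y_0,Y_1$, their $N$-word counts grow at fixed exponential rates you do not control, so you cannot force the ratio of the two counts to swing between very large and very small, which is what the argument requires. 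The paper resolves this by building the two competing families simultaneously: at odd stages the $B$-blocks are formed by free concatenation (so $|B_k|=|B_{k-1}|^{N_k}$) while the $A$-blocks are forced to be powers of a single block (so $|A_k|=|A_{k-1}|$), and vice versa at even stages; this yields $|B_k|>|A_k|^{100}$ for odd $k$ and the reverse for even $k$, while the symbol frequencies in the two families stay separated so that the two sets of accumulation points are distinguished by the mass of $[0]$. Second, ``adjoining words to the language'' does not by itself define a subshift: every added word must extend to an infinite point of $X$, and the way it concatenates with other words feeds back into the counts at all later scales; the nested construction $X=\bigcap_k\langle A_k\cup B_k\rangle$ is exactly the device that keeps this bookkeeping consistent, and your proposal has no substitute for it.

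Finally, the step ``$\mu_{\beta\varphi_X}$ is close to the entropy maximizer of the $N$-th approximation'' is not a standard pressure-perturbation bound, and it is where most of the work lies. One needs (i) a concentration estimate showing that at $\beta=2^{3\ell_k}$ all but a $2^{-\ell_k}$-fraction of the mass sits on points decomposable into globally admissible $\ell_k$-blocks (this uses $P_\beta(\mu_{\beta\varphi})\ge0$ together with the penalty $\beta\varphi\le-2^{\ell_k}$ off that set, and unique parsability of the blocks); (ii) a pressure lower bound $P_\beta(\mu_{\beta\varphi})\ge\frac{\log|B_k|}{\ell_k}-o(1)$ obtained from the explicit entropy-maximizing measure on $\langle B_k\rangle$, which requires knowing that long free concatenations of $B_k$-blocks lie extremely close to $X$ --- the paper arranges this by inserting a marker $c_k$ containing all length-$(2^{\ell_{k-1}}+1)$ concatenations of lower-order blocks; and (iii) a Shannon--McMillan counting argument showing that any measure giving a $\delta$-fraction of its $\ell_k$-blocks to the smaller family has entropy at most $(1-\tfrac{99}{100}\delta)\frac{\log|B_k|}{\ell_k}+H(2^{-\ell_k})$, contradicting (ii). None of these follows from Lipschitz regularity or generic perturbation estimates, and your inductive choice of scales would have to be organized around making all three quantitative statements hold simultaneously.
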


This theorem holds more generally for one-sided or two-sided mixing
shifts of finite type. 

Our construction gives reasonable control over the dynamics of $X$
and of the dynamics, number and geometry of the limit measures. An
interesting consequence of the construction is that the set of limit
measures need not be convex. We discuss these issues in section \ref{sec:Remarks}.

\subsection{The multi-dimensional case}

Our second result concerns higher dimension: non-convergence
can also arise when $d\geq 3$, {\em even for finite-range interactions};
contrast this with the positive result in dimension $d=1$ where
the zero temperature limit is known to exist in this case
\cite{Bremont03,Leplaideur05,ChazottesGambaudoUgalde09}.

While the methods used in the one-dimensional case are fairly classical and quite well-known
in the dynamics community, the study of zero-temperature limits
and ground states in higher dimensions turns out to be closely connected to symbolic dynamics,
and our results rely heavily on recent progress in understanding of multidimensional subshifts
of finite type, where computation theory plays a prominent role. Recall that a shift of finite type
$X\subseteq\{0,1\}^{\mathbb{Z}^{3}}$ is a subshift defined by a finite set $L$ of patterns and the condition
that $x\in X$ if and only if no pattern from $L$ appears in $x$.
Given $L\subseteq\{0,1\}^{E}$ one can define the finite-range interaction
$(\Phi_{B})_{B\subseteq\mathbb{Z}^{d},|B|<\infty}$ by
\[
\Phi_{E}(x)=\left\{ \begin{array}{cc}
-1/|E| & x|_{E}\in L\\
0 & \mbox{otherwise}\end{array}\right.
\]
and $\Phi_{B}=0$ for $B\neq E$; the associated potential on $\{0,1\}^{\mathbb{Z}^{d}}$
is
\[
\varphi_{L}(x):=\sum_{B\ni0}\frac{1}{|B|}\ \Phi_{B}(x)=\left\{ \begin{array}{cc}
-1 & x|_{E}\in L\\
0 & \mbox{otherwise}.
\end{array}\right.
\]
Clearly an invariant measure $\mu$ on $\{0,1\}^{\mathbb{Z}^{d}}$
satisfies $\int\varphi_{L}d\mu=0$ if and only if $\mu$ is supported
on $X$; thus the shift-invariant ground states are precisely the
shift invariant measures on $X$. In this sense $\varphi_{L}$
is similar to $\varphi_{X}$ (although there are some delicate differences,
as we shall see in section \ref{sec:The-multidimensional-case}). 

The main result of \cite{Hochman09} provides a general method for
transferring one-dimensional constructions to higher-dimensional SFTs
with corresponding directional dynamics. Using this we are able to
adapt the construction from theorem \ref{thm:main-theorem} to the multidimensional case
with a finite-range potential. 
Recall that an equilibrium state for a potential $\varphi$ at inverse temperature
$\beta$ is a shift-invariant measure which maximizes the functional $P_\beta$
of Eq. \eqref{eq:variational-functional}.

\begin{thm}
\label{thm:multidim}
For $d\geq 3$ there exist locally constant (i.e. finite-range) potentials $\varphi$
on $\{0,1\}^{{\mathbb Z}^d}$ such that for any family $(\mu_{\beta\varphi})_{\beta>0}$ in which
$\mu_{\beta\varphi}$ is an equilibrium state (i.e. a shift-invariant Gibbs state), the limit
$\lim_{\beta\to+\infty} \mu_{\beta\varphi}$ does not exist.
\end{thm}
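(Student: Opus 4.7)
The plan is to lift the one-dimensional construction of Theorem \ref{thm:main-theorem} to dimension $d\geq 3$ via the simulation theorem of \cite{Hochman09}. Take the subshift $X\subseteq\{0,1\}^{\mathbb{N}}$ of Theorem \ref{thm:main-theorem} witnessing non-convergence of the Gibbs measures for the Lipschitz potential $\varphi_X$. Inspecting the proof, one should be able to arrange that $X$ is an \emph{effective} subshift, i.e.\ one with a computable forbidden language; this comes essentially for free from any reasonably explicit construction.

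Next apply the main result of \cite{Hochman09}: every effective $\mathbb{Z}$-subshift arises as the projective (directional) subdynamics of a $d$-dimensional subshift of finite type $Y\subseteq\{0,1\}^{\mathbb{Z}^d}$ when $d\geq 3$. Let $L$ be the corresponding finite set of forbidden patterns defining $Y$, and let $\varphi:=\varphi_L$ be the associated finite-range potential defined in the excerpt; this is our candidate potential. As noted there, the shift-invariant ground states of $\varphi$ are precisely the shift-invariant measures carried by $Y$, and by construction these project, along the distinguished direction $e_1$, onto (suspensions of) the shift-invariant measures on $X$.

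To deduce non-convergence of the equilibrium states, consider any sequence $\beta_n\to+\infty$ and equilibrium states $\mu_{\beta_n\varphi}$ with a weak-$*$ limit $\mu^*$. Then $\mu^*$ is supported on $Y$ and its $1$-dimensional projection along $e_1$ is a shift-invariant measure on $X$, hence a ground state of $\varphi_X$. The heart of the argument is to show that this projection coincides with a weak-$*$ accumulation point of the $1$-dimensional Gibbs measures $\mu_{\beta_n\varphi_X}$: the rigidity of the Hochman simulation forces the directional subdynamics of $\mu_{\beta_n\varphi}$ to mimic the $1$-dimensional Gibbs measure for a potential having the same ground states as $\varphi_X$, with only a $\beta$-independent entropy contribution from the orthogonal directions. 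Non-convergence in dimension one (Theorem \ref{thm:main-theorem}) then yields distinct limits $\mu^*$ along distinct subsequences, proving Theorem \ref{thm:multidim}.

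The main obstacle is making the slice-wise correspondence rigorous. The finite-range $\varphi_L$ is binary-valued and does not provide a continuous ``distance-to-$Y$'' quantification analogous to $\varphi_X(y)=-d(y,X)$, so the reduction of the $d$-dimensional variational problem to the $1$-dimensional one at positive $\beta$ is delicate. Extra entropy from the orthogonal directions, correlations between fibers enforced by the simulation, and the possible presence of multiple non-shift-invariant Gibbs states must all be controlled. The technical core likely consists in exploiting the hierarchical/layered structure of Hochman's construction to quasi-factorize the equilibrium state as a product of a $1$-dimensional Gibbs-like measure along $e_1$ with a high-entropy background in the orthogonal directions, yielding the desired reduction.
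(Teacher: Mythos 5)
Your starting point --- realize a one-dimensional counterexample as the directional subdynamics of a $\mathbb{Z}^d$ SFT via \cite{Hochman09} and take the associated finite-range potential $\varphi_L$ --- is indeed the paper's starting point, but the two issues you defer to the ``technical core'' are exactly where the proposal breaks down, and fixing them requires changing the construction, not just sharpening the analysis. First, the entropy scales do not match. The one-dimensional argument separates the two candidate limits by comparing $\log|A_k|$ with $\log|B_k|$, quantities of order $\ell_k$; equilibrium states on $\{0,1\}^{\mathbb{Z}^3}$ are governed by entropy per unit \emph{volume}, i.e.\ pattern counts of order $2^{c|E_n|}$ with $|E_n|\sim n^3$. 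The simulation theorem produces a zero-entropy SFT, so the block-count gap inherited from the one-dimensional subshift is $o(|E_n|)$ in the exponent and is invisible to the three-dimensional variational principle. Your proposed quasi-factorization into a one-dimensional Gibbs-like measure along $e_1$ times a high-entropy background makes this worse: any genuine background entropy swamps the one-dimensional contribution. The paper circumvents this by replacing $X$ with a different one-dimensional subshift $Y$ over $\{0,1,2\}$ in which it is the \emph{frequency} of the symbol $0$, rather than the block count, that alternates between the two families, and then, exploiting the $T_2,T_3$-invariance of the partition $C$, decorating each $0$-plane of the SFT with two colors $0',0''$. This converts a symbol frequency $f_0(a)$ into $2^{f_0(a)|E_n|+o(|E_n|)}$ patterns, i.e.\ into honest three-dimensional entropy, and restores a gap between $|L_{\ell_k}(\widehat{X}_1)|$ and $|L_{\ell_k}(\widehat{X}_2)|$ large enough to drive the concentration argument of Proposition \ref{concentration-on-Ak-or-Bk}.

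Second, you correctly observe that $\varphi_L$ does not measure distance to the SFT, but you do not resolve the problem, and it cannot be waved away: an equilibrium state with $\int\varphi_L\,d\mu\approx 0$ concentrates on \emph{locally} admissible patterns, which a priori need not occur in the SFT at all, so there is no well-defined projection of $\mu_{\beta\varphi}$ onto the subdynamics to compare with the one-dimensional Gibbs measures. The paper needs a quantitative refinement of the simulation theorem (Theorem \ref{thm:subdynamics-refined}) providing a \emph{computable} radius $R_i$ such that local admissibility on $E_{R_i}$ forces global admissibility on $E_{n(i)}$, together with a self-referential algorithm that computes its own running times in order to choose $N_k$ larger than the relevant $R(\cdot)$; only then does one know that $\mu_{\beta\varphi}$-typical configurations are globally admissible on the intermediate scale $\ell'_k$ at which the pattern-count gap already holds. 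Without both ingredients --- the frequency-to-entropy conversion and the computable local-to-global admissibility radius --- the slice-wise reduction you sketch does not go through.
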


Some comments are in order because the previous statement is rather subtle.
If there were a unique Gibbs state for each $\beta$ then there would be a unique
choice for $\mu_{\beta\varphi}$, and the previous result could be formulated
more transparently: there exist locally constant potentials such that 
$\lim_{\beta\to+\infty} \mu_{\beta\varphi}$ does not exist. But we believe
that in our example uniqueness does not hold at low temperatures.\newline
A more precise way to state the previous theorem is to say
that the set-valued sequence $(\mathcal{G}(\beta \Phi))_{\beta>0}$ does
not converge in Hausdorff metric topology.\newline
Our result is about continuous families and does not contradict
the fact that for each given family $(\mu_{\beta\varphi})_{\beta>0}$
of equilibrium states, there always exists a subsequence $(\beta_i)_{i\in\mathbb{N}}$
such that the limit $\lim_{i\to\infty} \mu_{\beta_i \varphi}$ exists. This is due
to compactness of the space of probability measures.\newline
There is nothing new in the fact that one can choose {\em some} divergent
family $\beta\mapsto \mu_{\beta\varphi}$ of equilibrium states. Think {\em e.g.}
of the Ising model below the critical temperature ($\beta$ large enough): One can choose a family which alternates between the $+$ and $-$ phases. However it is also possible to choose families which converge
to one of the ground states. Let us insist that in contrast to this kind of situation we prove
the existence of examples where it is {\em not possible} to choose {\em any} family which converges
to a ground state.\newline
Let us say a few words about the limitations of this result. First, it seems likely that our examples support
non-shift-invariant Gibbs states, i.e. Gibbs states which are not equilibrium states, and, furthermore,
we do not know if the statement extends to them. Hence the requirement of
shift-invariance. As for the restriction $d\geq 3$, the method used in our construction, which produces
a potential of the form $\varphi_{L}$ above, relies on the results from \cite{Hochman09}
which at present are not available in $d=2$; but probably they hold in that case as well.

\begin{problem*} 
For $d\geq2$, do there exist finite-range potentials on the 
$d$-lattice such that every family of Gibbs states 
$\{\mu_{\beta\Phi},\beta >0\}$ fails to converge as 
$\beta\to+\infty$? 
\end{problem*}

In the next section we construct the subshift $X$ of theorem \ref{thm:main-theorem}.
Section \ref{sec:Analysis} contains the analysis and proof of theorem
\ref{thm:main-theorem}. Section \ref{sec:Remarks} contains some
remarks and problems. Section \ref{sec:The-multidimensional-case}
discusses the multidimensional case.

\begin{acknowledgement*}
We are grateful to Mar\'{i}a Isabel Cortez for pointing out a gap
in an early version of this paper. We are also grateful to A. C. D. van Enter
for useful comments.
\end{acknowledgement*}

\section{\label{sec:Construction-of-X}Construction of $X$}

For each $k\geq0$ we define by induction integers $\ell_{k}$, and
finite sets of blocks $A_{k},B_{k}\subseteq\{0,1\}^{\ell_{k}}$. The
construction uses an auxiliary sequence of integers $N_{1},N_{2},\ldots$,
with $N_{1}\ldots N_{k}$ determining $A_{i},B_{i},\ell_{i}$ for
$i\leq k$. Here we treat the $N_{k}$ as given, but in fact at each
stage we are free to choose $N_{k+1}$ based on the construction so
far, and during the analysis in the next section we impose conditions
on the relation between $\ell_{k}$ and $N_{k+1}$.

Begin with $\ell_{0}=5$, and let
\begin{eqnarray*}
A_{0} & = & \{00000,01000\}\\
B_{0} & = & \{11111,10111\}.
\end{eqnarray*}

Next, given $A_{k-1},B_{k-1}$ and $\ell_{k-1}$ and the parameter
$N_{k}$, let $c_{k}$ be a block containing every block in $(A_{k-1}\cup B_{k-1})^{2^{\ell_{k-1}}+1}$,
e.g. enumerate all these blocks and concatenate them.

We proceed in one of two ways, depending on whether $k$ is odd or
even. We denote by $ab$ the concatenation of blocks $a,b$ of symbols,
and by $a^{k}$ the $k$-fold concatenation of a block $a$.
\begin{itemize}
\item If $k$ is odd, let
\begin{eqnarray*}
A_{k} & = & \{c_{k}a^{N_{k}}\,:\, a\in A_{k-1}\}\\
B_{k} & = & \{c_{k}b_{1}b_{2}\ldots b_{N_{k}}\,:\, b_{i}\in B_{k-1}\}.
\end{eqnarray*}

\item If $k$ is even, set
\begin{eqnarray*}
A_{k} & = & \{c_{k}a_{1}a_{2}\ldots a_{N_{k}}\,:\, a_{i}\in A_{k-1}\}\\
B_{k} & = & \{c_{k}b^{N_{k}}\,:\, b\in B_{k-1}\}.
\end{eqnarray*}

\end{itemize}
Thus $A_{k},B_{k}$ consist of blocks of the same length, which we
denote $\ell_{k}$. Note that $\ell_{k}$ can be made arbitrarily
large by increasing $N_{k}$.

If we assume that $N_{k}$ is large enough then one can identify the
occurrences of $c_{k}$ in any long enough subword of length $2\ell_{k}$
of a concatenation of blocks from $A_{k}\cup B_{k}$. This is shown
by induction: first one shows that one can identify the $A_{k-1}\cup B_{k-1}$-blocks,
and then $c_{k}$ is identifiable because it contains blocks from
both $A_{k-1}$ and $B_{k-1}$.

For a set $\Sigma$ let $\Sigma^{*}$ denote the set of all concatenations
of elements from a set $\Sigma$. Given a finite set $L\subseteq\{0,1\}^{*}$
let \[
\left\langle L\right\rangle =\bigcup_{n}T^{n}(L^{\mathbb{N}})\]
denote the subshift consisting of all shifts of concatenations of
blocks from $L$. Note that if $L'\subseteq L^{*}$, then $\left\langle L'\right\rangle \subseteq\left\langle L\right\rangle $.
Let
\[
L_{k}=A_{k}\cup B_{k},
\]
so that $L_{k+1}\subseteq L_{k}^{*}$, and define
\[
X=\bigcap_{k=1}^{\infty}\left\langle L_{k}\right\rangle.
\]
Alternatively, $X$ is the set of points $x\in\{0,1\}^{\mathbb{N}}$
such that every finite block in $x$ appears as a sub-block in a block
from some $L_{k}$.

\section{\label{sec:Analysis}Analysis of the zero-temperature limit}

We make some preliminary observations. For $u\in L_{k}$ let
\[
f_{i}(u)=\mbox{frequency of }i\mbox{ in }u.
\]
Then the following is clear from the construction:
\begin{lem}
\label{lem:frequency-decay}If $N_{k}/\ell_{k-1}$ increases rapidly
enough, then $f_{0}(u)>\frac{2}{3}$ for $u\in A_{k}$ and $f_{0}(u)<\frac{1}{3}$
for $u\in B_{k}$.
\end{lem}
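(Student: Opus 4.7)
The plan is to set
\[
\alpha_{k} = \min_{u\in A_{k}} f_{0}(u), \qquad \beta_{k} = \max_{u\in B_{k}} f_{0}(u),
\]
and to track these quantities by induction on $k$. At $k=0$ one checks directly that $\alpha_{0}=4/5$ and $\beta_{0}=1/5$, so there is slack on both sides of the desired thresholds $2/3$ and $1/3$.

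For the inductive step, the key observation is structural: regardless of the parity of $k$, every $u\in A_{k}$ has the form $u = c_{k} a_{1}a_{2}\cdots a_{N_{k}}$ with each $a_{i}\in A_{k-1}$, and every $u\in B_{k}$ has the form $u = c_{k} b_{1}b_{2}\cdots b_{N_{k}}$ with each $b_{i}\in B_{k-1}$. (The odd/even distinction only constrains whether the $a_{i}$ must all coincide or may vary, which does not affect symbol frequencies.) Writing $M_{k}=|c_{k}|$ and $\eta_{k}=M_{k}/\ell_{k}=M_{k}/(M_{k}+N_{k}\ell_{k-1})$, a direct count gives
\[
f_{0}(u) \;=\; \eta_{k}\, f_{0}(c_{k}) \;+\; (1-\eta_{k})\cdot\frac{1}{N_{k}}\sum_{i}f_{0}(a_{i})
\]
for $u\in A_{k}$, and similarly for $u\in B_{k}$. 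Using $f_{0}(c_{k})\in[0,1]$ as a trivial bound on the $c_{k}$-contribution and the inductive hypothesis on the rest yields the recursions
\[
\alpha_{k} \;\geq\; (1-\eta_{k})\,\alpha_{k-1}, \qquad \beta_{k} \;\leq\; \eta_{k} + (1-\eta_{k})\,\beta_{k-1}.
\]

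Now the growth assumption enters: since $M_{k}$ and $\ell_{k-1}$ are already determined at stage $k$, the quantity $\eta_{k}$ can be made as small as we please by choosing $N_{k}$ large enough. Concretely, for any prescribed summable sequence $\eta_{k}>0$ (e.g.\ $\eta_{k}=2^{-k-6}$) the induction on the previous stages gives $M_{k}$ and $\ell_{k-1}$, and we pick $N_{k}$ so large that $M_{k}/(N_{k}\ell_{k-1})\leq\eta_{k}$. With this choice, $\alpha_{k}\geq\alpha_{0}\prod_{j\geq 1}(1-\eta_{j})\geq (4/5)(1-\sum\eta_{j})$ and $\beta_{k}\leq \beta_{0}+\sum_{j\geq 1}\eta_{j}\leq 1/5 + \sum\eta_{j}$. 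Taking $\sum\eta_{j}$ small enough (the choice above gives $\sum\eta_{j}\leq 1/64$) keeps $\alpha_{k}>2/3$ and $\beta_{k}<1/3$ uniformly in $k$.

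There is no real obstacle: the only point that needs a moment's care is verifying the decomposition claim uniformly in the parity of $k$, after which everything reduces to a one-line recursion and a choice of rapidly decaying $\eta_{k}$. The condition ``$N_{k}/\ell_{k-1}$ increases rapidly enough'' in the statement is exactly what is needed to make $\eta_{k}$ summable to a small total.
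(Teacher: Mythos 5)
Your proof is correct: the paper states this lemma without proof (``clear from the construction''), and your argument---decomposing each block as $c_k$ followed by $N_k$ blocks from the previous stage, bounding the $c_k$-contribution trivially, and making $\eta_k=|c_k|/\ell_k$ summably small by taking $N_k/\ell_{k-1}$ large---is exactly the verification the authors leave to the reader. The base values $\alpha_0=4/5$, $\beta_0=1/5$ and the uniformity of the decomposition across parities check out, so nothing further is needed.
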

In fact it can be shown that $X$ supports two ergodic measures, respectively
giving mass $>\frac{2}{3}$ and $<\frac{1}{3}$ to the cylinder $[0]$.

The construction is designed so that the ratio $|A_{k}|/|B_{k}|$
fluctuates between very large and very small. More precisely, one
may verify the following:
\begin{lem}
\label{lem:block-counts}If $N_{k}/\ell_{k-1}$ is sufficiently large, then
\begin{itemize}
\item If $k$ is odd then $|B_{k}|>|A_{k}|^{100}$.
\item If $k$ is even then $|A_{k}|>|B_{k}|^{100}$.
\end{itemize}
\end{lem}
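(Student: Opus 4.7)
The plan is to derive closed-form recursions for $a_k := |A_k|$ and $b_k := |B_k|$ directly from the definitions in section \ref{sec:Construction-of-X}, and then observe that the desired polynomial-type gap between $a_k$ and $b_k$ is forced once $N_k$ is taken large enough relative to $\ell_{k-1}$.

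First I would note that since every element of $A_{k-1}$ (respectively $B_{k-1}$) has the common length $\ell_{k-1}$, the concatenation map $(u_1,\ldots,u_{N_k}) \mapsto c_k u_1 \cdots u_{N_k}$ is injective, so the defining formulas give exact cardinalities. For $k$ odd this yields
\[
a_k = a_{k-1}, \qquad b_k = b_{k-1}^{N_k},
\]
while for $k$ even,
\[
a_k = a_{k-1}^{N_k}, \qquad b_k = b_{k-1}.
\]

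Next I would use the crude bounds $2 \le a_j, b_j \le 2^{\ell_j}$: the lower bound holds because $|A_0|=|B_0|=2$ and neither recursion decreases cardinality, and the upper bound because $A_j, B_j \subseteq \{0,1\}^{\ell_j}$. For $k$ odd, taking base-$2$ logarithms gives
\[
\log_2 b_k = N_k \log_2 b_{k-1} \ge N_k \quad\text{and}\quad \log_2 a_k = \log_2 a_{k-1} \le \ell_{k-1},
\]
so the inequality $b_k > a_k^{100}$ reduces to $N_k > 100\,\ell_{k-1}$, a condition of the form ``$N_k/\ell_{k-1}$ sufficiently large.'' The even case is identical with the roles of $A$ and $B$ reversed.

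There is no substantive obstacle here: the only point requiring care is injectivity of the concatenation map, which is immediate from the uniformity of block lengths at each level. In fact the constant $100$ in the statement is arbitrary — any prescribed polynomial gap between the cardinalities follows from the same one-line log-inequality, after strengthening the lower bound required of $N_k/\ell_{k-1}$ by the same factor.
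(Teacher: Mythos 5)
Your proof is correct, and since the paper omits the verification entirely (``one may verify the following''), your counting argument --- the exact recursions $|A_k|=|A_{k-1}|$, $|B_k|=|B_{k-1}|^{N_k}$ for odd $k$ (and symmetrically for even $k$), combined with the bounds $2\le |A_j|,|B_j|\le 2^{\ell_j}$ --- is precisely the intended one-line check. The injectivity of concatenation from the uniform block length $\ell_{k-1}$, which you flag as the only delicate point, is indeed all that is needed to make the cardinality recursions exact.
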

The next two lemmas show that for certain values of $\beta$ the measure
$\mu_{\beta\varphi}$ concentrates mostly on blocks from $L_{k}$. Let
\[
Y_{k}=\{x\in\{0,1\}^{\mathbb{N}}\,:\, x|_{[i,i+\ell_{k}-1]}\in L_{k}\mbox{ for some }i\in[0,\ell_{k}-1]\}.
\]
$Y_{k}\subseteq\{0,1\}^{\mathbb{N}}$ is an open and closed set.
\begin{lem}
\label{lem:concentration-on-Lk}For $\beta=2^{3\ell_{k}}$,
\[
\mu_{\beta\varphi}(Y_{k})>1-2^{-\ell_{k}}.
\]
\end{lem}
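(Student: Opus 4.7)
The plan is to apply the variational principle for the pressure. The heuristic: at $\beta = 2^{3\ell_k}$ the entropy term in the pressure is $O(1)$, while points outside $Y_k$ pay an energetic penalty of order $2^{-2\ell_k}$, so the equilibrium measure of $Y_k^c$ must be at most of order $2^{-\ell_k}$.

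The first step is a geometric lower bound: for every $y \in Y_k^c$,
\[
d(y,X) \geq c\cdot 2^{-2\ell_k}
\]
for some absolute constant $c>0$. Indeed, any $x \in X \subseteq \langle L_k\rangle$ is of the form $T^n z$ with $z \in L_k^{\mathbb N}$; since the $L_k$-blocks of $z$ are aligned on the arithmetic progression $\ell_k\mathbb N$, there is an offset $j \in [0,\ell_k-1]$ with $x|_{[j,j+\ell_k-1]} \in L_k$. Because $y \in Y_k^c$, the block $y|_{[j,j+\ell_k-1]}$ lies outside $L_k$; hence $x$ and $y$ disagree at some coordinate in $[0,2\ell_k-2]$, yielding the claimed bound uniformly in $x$, so $|\varphi(y)| = d(y,X) \geq c\cdot 2^{-2\ell_k}$.

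Next, invoke the variational principle. Because $\varphi \equiv 0$ on $X$ and $X$ carries a shift-invariant measure, $P(\beta\varphi) \geq 0$. For the equilibrium measure $\mu = \mu_{\beta\varphi}$,
\[
0 \leq P(\beta\varphi) = h(\mu) + \beta\int\varphi\,d\mu,
\]
so from $h(\mu) \leq \log 2$ and $\varphi \leq 0$ one obtains $\beta\int|\varphi|\,d\mu \leq \log 2$. Combining with the pointwise lower bound on $|\varphi|$ over $Y_k^c$,
\[
c\cdot 2^{-2\ell_k}\cdot\mu(Y_k^c) \leq \int|\varphi|\,d\mu \leq \frac{\log 2}{\beta} = (\log 2)\cdot 2^{-3\ell_k},
\]
and hence $\mu(Y_k^c) \leq (\log 2/c)\cdot 2^{-\ell_k}$, which strengthens to $\mu(Y_k^c) < 2^{-\ell_k}$ by tracking the explicit constant from the dyadic metric (essentially $c=8$, so $\log 2 /c < 1$), or if necessary by assuming $\ell_k$ is sufficiently large, which is harmless since $\ell_k \to \infty$ in the construction.

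The main obstacle is the geometric step, and in particular confirming that the $L_k$-alignment of points in $X$ really forces a uniform energetic penalty on all of $Y_k^c$. The variational step is then essentially a one-line computation, but the exponents $3$ in $\beta = 2^{3\ell_k}$ and $2$ in $d(y,X) \gtrsim 2^{-2\ell_k}$ must be balanced so that the $\log 2$ factor from the entropy is absorbed by the $2^{-\ell_k}$ target; this balance motivates the particular choice of $\beta$ in the statement.
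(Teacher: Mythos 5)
Your proof is correct and takes essentially the same route as the paper's: a uniform lower bound $d(y,X)\geq c\,2^{-2\ell_k}$ on $Y_k^{c}$, the inequality $P_\beta(\mu_{\beta\varphi})\geq P_\beta(\nu)\geq 0$ obtained from an invariant measure $\nu$ supported on $X$, and the entropy bound $h(\mu_{\beta\varphi})\leq \log 2$ to force $\beta\int|\varphi|\,d\mu_{\beta\varphi}$ to be small. The only difference is that you spell out the block-alignment argument behind the geometric step, which the paper asserts in one line.
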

\begin{proof}
If $x\notin Y_{k}$, then we certainly have $d(x,X)>2^{-2\ell_{k}}$.
Therefore,
\begin{eqnarray*}
\int\beta\varphi d\mu_{\beta\varphi} & = & \int-\beta d(y,X)d\mu_{\beta\varphi}(y)\\
 & < & -2^{3\ell_{k}}\cdot2^{-2\ell_{k}}\mu_{\beta\varphi}(\{0,1\}^{\mathbb{N}}\setminus Y_{k})\\
 & = & 2^{\ell_{k}}(\mu_{\beta\varphi}(Y_{k})-1).
 \end{eqnarray*}
Since $h(\mu_{\beta\varphi})\leq1$ we have
\[
P_{\beta}(\mu_{\beta\varphi})\leq\int\beta\varphi d\mu_{\beta\varphi}+1\leq2^{\ell_{k}}\mu_{\beta\varphi}(Y_{k})-(2^{\ell_{k}}-1).
\]
Finally, choosing $\nu$ to be an invariant measure supported on $X$
we have $P_{\beta}(\nu)=h(\nu)\geq0$, hence $P_{\beta}(\mu_{\beta\varphi})\geq P_{\beta}(\nu)\geq0$.
Combining these we have the desired inequality.\end{proof}
\begin{lem}
\label{lem:decomposition-into-blocks}For $\beta=2^{3\ell_{k}}$,
for all large enough $n$ at least half of the mass of $\mu_{\beta\varphi}$
is concentrated on sequences $u\in\{0,1\}^{n}$ which can be decomposed
as \begin{equation}
u=\diamond v_{1}\diamond\ldots v_{2}\diamond\ldots v_{m}\diamond\label{eq:block-decomposition}\end{equation}
where $v_{i}\in L_{k}$, the symbol $\diamond$ represents blocks of
$0,1$'s (which may vary from place to place), and at least a $(1-2^{-\ell_{k}})$-fraction
of indices $j\in[0,n)$ lie in one of the $v_{i}$. \end{lem}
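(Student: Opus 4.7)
The plan is to bootstrap Lemma~\ref{lem:concentration-on-Lk} into a statement about typical \emph{positions} in a typical sequence, via shift-invariance and a greedy parsing.

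First, by shift-invariance of $\mu_{\beta\varphi}$ together with Lemma~\ref{lem:concentration-on-Lk}, $\mu_{\beta\varphi}(\sigma^{-i}Y_{k})>1-2^{-\ell_{k}}$ for every $i\ge 0$. Summing these indicators and applying Markov's inequality produces a set of $\mu_{\beta\varphi}$-measure at least $1/2$ on which the number of ``bad'' indices $i\in[0,n)$, i.e.\ those with $\sigma^{i}x\notin Y_{k}$, is of order $2^{-\ell_{k}}n$.

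Given such an $x$, I would build the $v_{i}$'s greedily: scan $[0,n)$ left to right with a pointer $p$; when $\sigma^{p}x\in Y_{k}$, there is by definition an earliest $s\in[p,p+\ell_{k}-1]$ with $x|_{[s,s+\ell_{k}-1]}\in L_{k}$, so record this $L_{k}$-block as a $v_{i}$ and advance $p$ to $s+\ell_{k}$; at a bad $p$ just advance by one. The resulting blocks are pairwise disjoint, and the only uncovered positions are bad indices or short ``gap'' intervals $[p,s-1]$ preceding chosen blocks.

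The main obstacle is controlling the gap waste tightly: the naive accounting allows up to $\ell_{k}-1$ wasted positions per chosen block, which would only yield coverage of order $1/2$. To reach the target bound I would refine the Markov step using the stronger estimate $\int d(y,X)\,d\mu_{\beta\varphi}\le(\log 2)/2^{3\ell_{k}}$ that underlies Lemma~\ref{lem:concentration-on-Lk}: by Markov, for typical $x$ all but a $2^{-\ell_{k}}$-fraction of shifts $\sigma^{i}x$ lie within distance $2^{-2\ell_{k}}$ of $X$, so that $x|_{[i,i+2\ell_{k}-1]}$ is a prefix of a point of $X$. Since $N_{k}$ is taken large enough to make the $L_{k}$-parsing of such prefixes unambiguous (this is precisely the assumption used in identifying $c_k$-occurrences), the starter chosen by the greedy at each such \emph{very-close} index lies on a single canonical $\ell_{k}$-periodic grid, consistently across every maximal run of very-close indices. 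Hence the greedy incurs no waste inside a run and loses only $O(\ell_{k})$ positions at each run-boundary, whose count is in turn bounded by the number of non-very-close indices, itself of order $2^{-\ell_{k}}n$. Balancing these contributions and taking $n$ sufficiently large so that the remaining boundary $O(\ell_{k})$ terms are negligible relative to $2^{-\ell_k}n$ gives the claimed $(1-2^{-\ell_{k}})$ coverage.
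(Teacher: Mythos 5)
Your overall strategy (reduce to a statement about typical positions via shift-invariance, then parse greedily) is workable in spirit, but the quantitative accounting in your final step does not close. You bound the uncovered positions by (i) the non-very-close indices, of order $2^{-\ell_k}n$, plus (ii) $O(\ell_k)$ wasted positions at each run boundary. The number of maximal runs of very-close indices is controlled only by the number of non-very-close indices, which is itself of order $2^{-\ell_k}n$ and hence grows \emph{linearly} in $n$. So the boundary waste totals $O(\ell_k\cdot 2^{-\ell_k}n)$, which exceeds the target $2^{-\ell_k}n$ by a factor of $\ell_k$ no matter how large $n$ is; your claim that these terms become ``negligible relative to $2^{-\ell_k}n$'' for large $n$ is therefore false. (Separately, the Markov step at level $\tfrac12$ already doubles the count of bad indices, which by itself breaks the exact constant $1-2^{-\ell_k}$.) The weakened bound $1-C\ell_k2^{-\ell_k}$ would in fact still suffice for Proposition \ref{concentration-on-Ak-or-Bk}, but it is not what the lemma asserts, and your argument as written does not justify even the intermediate claims cleanly.

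The paper avoids all of this by counting block \emph{starts} rather than indices near a block. Set $Y'_k=\{x:\,x|_{[0,\ell_k-1]}\in L_k\}$. Since $Y_k=\bigcup_{i=0}^{\ell_k-1}T^{-i}Y'_k$, Lemma \ref{lem:concentration-on-Lk} and shift-invariance give $\mu_{\beta\varphi}(Y'_k)>\frac{1}{\ell_k}(1-2^{-\ell_k})$. The unique identifiability of the $c_k$-markers implies that if $y\in Y'_k$ then $T^iy\notin Y'_k$ for $1\le i<\ell_k$: distinct $L_k$-occurrences cannot overlap. Hence, by the ergodic theorem (ergodicity of the Gibbs measure), for large $n$ at least half the mass sits on points with at least $\frac{n}{\ell_k}(1-2^{-\ell_k})$ block starts in $[0,n)$, each contributing $\ell_k$ pairwise disjoint covered indices — giving the $(1-2^{-\ell_k})$ fraction directly, with no greedy parsing, no gap waste, and no grid-consistency argument. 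I would recommend replacing your construction with this one; the non-overlap observation is exactly the fact your ``canonical grid'' discussion is groping toward, used at the point where it makes the bookkeeping trivial.
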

\begin{proof}
Let
\[
Y'_{k}=\{x\in\{0,1\}^{\mathbb{N}}\,:\, x|_{[0,\ell_{k}-1]}\in L_{k}\}.
\]
Since $Y_{k}=\bigcup_{i=0}^{\ell_{k}-1}T^{-i}Y'_{k}$, from the previous
lemma and shift-invariance of $\mu_{\beta\varphi}$, we see that
\[
\mu_{\beta\varphi}(Y'_{k})>\frac{1}{\ell_{k}}(1-2^{-\ell_{k}}).
\]
Since $\mu_{\beta\varphi}$ is ergodic (being a Gibbs measure), by the ergodic
theorem, for $n$ large enough at least half the mass of $\mu_{\beta\varphi}$
is concentrated on points $x\in X$ such that
\[
\frac{1}{n}\#\{i\in[0,n-1]\,:\, T^{i}x\in Y'_{k}\}>\frac{1}{\ell_{k}}(1-2^{-\ell_{k}}).
\]
Since the beginning of an $L_{k}$-block is uniquely determined (because
the $c_{k}$ blocks can be identified uniquely) we also have that
if $y\in Y'_{k}$, then $T^{i}y\notin Y'_{k}$ for all $1\leq i<\ell_{k}$.
Thus if $u$ is the initial $n$-segment of a point $x$ as above,
then there is a representation of $u$ of the desired form. 
\end{proof}
Next, we obtain a lower bound on $P_{\beta}(\mu_{\beta\varphi})$:
\begin{lem}
\label{lem:Markov-approx-to-Gibbs}If $k$ is odd and $\beta=2^{-3\ell_{k}}$
then
\[
P_{\beta}(\mu_{\beta\varphi})>\frac{\log|B_{k}|}{\ell_{k}}-2^{3\ell_{k}}2^{-\ell_{k}2^{\ell_{k}}}.
\]
A similar statement holds for even $k$ and $A_{k}$.\end{lem}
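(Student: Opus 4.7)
The plan is to lower bound $P_\beta(\mu_{\beta\varphi})$ via the variational principle, by constructing a test measure $\nu$ whose entropy already matches the main term $\frac{\log|B_k|}{\ell_k}$ and whose support sits extremely close to $X$, and then invoking $P_\beta(\mu_{\beta\varphi})\geq P_\beta(\nu)$. Since $k$ is odd, Lemma \ref{lem:block-counts} gives $|B_k|\gg|A_k|$, so the abundant level-$k$ blocks are those in $B_k$. I would take $\mu_0$ to be the i.i.d.\ uniform Bernoulli measure on $B_k^{\mathbb N}$, pushed forward to $\{0,1\}^{\mathbb N}$ by concatenation, and then form its shift-invariant average $\nu=\frac{1}{\ell_k}\sum_{i=0}^{\ell_k-1}T^i_*\mu_0$.

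A standard Abramov-type computation yields $h_T(\nu)=\frac{1}{\ell_k}h_{T^{\ell_k}}(\nu)=\frac{\log|B_k|}{\ell_k}$, which contributes the main term of the claimed bound. It remains to bound $\int\varphi\,d\nu=-\int d(y,X)\,d\nu(y)$ from below by showing that every $y\in\mathrm{supp}(\nu)$ satisfies $d(y,X)\leq 2^{-\ell_k 2^{\ell_k}}$. Any such $y$ is a shift by some $j<\ell_k$ of a $B_k$-concatenation, so the prefix $y|_{[0,\ell_k 2^{\ell_k})}$ is a sub-block of a concatenation of at most $2^{\ell_k}+1$ blocks from $L_k$. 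By the construction of $c_{k+1}$, which contains every element of $(A_k\cup B_k)^{2^{\ell_k}+1}$ as a sub-block, this prefix appears inside $c_{k+1}$. Since $c_{k+1}$ is a prefix of every block in $L_{k+1}$ and every $L_{k+1}$-block appears inside all higher $L_j$-blocks by the recursive construction, the prefix lies in the language of $X$; shift-invariance of $X$ then produces $x\in X$ with $x|_{[0,\ell_k 2^{\ell_k})}=y|_{[0,\ell_k 2^{\ell_k})}$, giving $d(y,X)\leq 2^{-\ell_k 2^{\ell_k}}$.

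Combining the two estimates with $\beta=2^{3\ell_k}$ yields
\[
P_\beta(\mu_{\beta\varphi})\geq P_\beta(\nu)=\beta\int\varphi\,d\nu+h(\nu)\geq -2^{3\ell_k}\cdot 2^{-\ell_k 2^{\ell_k}}+\frac{\log|B_k|}{\ell_k},
\]
which is the desired bound. The even-$k$ case is identical with $A_k$ replacing $B_k$ throughout. The main step to verify carefully is that prefixes of $\nu$-typical points really do embed into the language of $X$: this needs a short induction on $j\geq k+1$ showing that $L_{j}$-blocks, which begin with $c_{j}$ and otherwise consist of $L_{j-1}$-blocks, contain every $L_{k+1}$-block and hence $c_{k+1}$. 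The entropy computation is routine but worth stating because $\nu$ is produced by shift-averaging a non-$T$-invariant measure; one uses that the ergodic components of $\nu$ under $T^{\ell_k}$ are precisely the $T^i_*\mu_0$, each with $T^{\ell_k}$-entropy equal to $\log|B_k|$, and then applies Abramov's formula $h_{T^{\ell_k}}(\nu)=\ell_k\,h_T(\nu)$.
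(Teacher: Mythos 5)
Your proposal is correct and follows essentially the same route as the paper: the paper also takes $\nu$ to be the measure of maximal entropy on $\left\langle B_{k}\right\rangle$ (which is exactly your shift-averaged Bernoulli measure, with $h(\nu)=\log|B_{k}|/\ell_{k}$), and bounds $\int\beta\varphi\,d\nu$ using the fact that $c_{k+1}$ contains every block of $(A_{k}\cup B_{k})^{2^{\ell_{k}}+1}$, so every point of $\left\langle B_{k}\right\rangle$ agrees with a point of $X$ on its first $\ell_{k}2^{\ell_{k}}$ coordinates. The only difference is that you spell out the Abramov computation and the embedding induction explicitly, which the paper leaves implicit.
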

\begin{proof}
Let $\nu$ be the entropy-maximizing measure on $\left\langle B_{k}\right\rangle $.
Since \[
P_{\beta}(\mu_{\beta\varphi})\geq P_{\beta}(\nu)=h(\nu)-\int\beta\varphi d\nu\]
and $h(\nu)=\frac{\log|B_{k}|}{\ell_{k}}$ it suffices to show that
\begin{equation}
\int\beta\varphi(y)d\nu(y)>-2^{3\ell_{k}}2^{-\ell_{k}2^{\ell_{k}}}.
\label{eq:lower-bound-on-beta-phi-mean}\end{equation}
Indeed, if $y\in\left\langle B_{k}\right\rangle $ then $y=ab_{1}b_{2}\ldots$
where $b_{i}\in B_{k}$ and $a$ is the tail segment of a block in
$B_{k}$. Since, by construction, every concatenation of $2^{\ell_{k}}+1$
blocks from $B_{k}$ appears in $X$, it follows that the initial
segment of $y$ of length $\ell_{k}2^{\ell_{k}}$ appears in $X$,
and therefore $d(y,X)<2^{-\ell_{k}2^{\ell_{k}}}$, and \eqref{eq:lower-bound-on-beta-phi-mean}
follows. 
\end{proof}
The last component of the proof is to show that, for $\beta=2^{3\ell_{k}}$,
the measures $\mu_{\beta\varphi}$ concentrate alternately $B_{k}$ and $A_{k}$.
This is essentially due to the fact that by the lemmas above, $\mu_{\beta\varphi}$
is mostly supported on the blocks of $L_{k}$, and because of the
appearance of entropy in the variational formula, it tends to give
approximately equal mass to these blocks. Since $|B_{k}|/|L_{k}|\rightarrow1$
along the odd integers and $|A_{k}|/|L_{k}|\rightarrow1$ along the
even ones, this implies that $\mu_{\beta\varphi}$ will alternately be supported
mostly on $B_{k}$ and $A_{k}$.

Here are the details. Denote by $[u]$ the cylinder set defined by
a block $u\in\{0,1\}^{*}$.
\begin{prop}
\label{concentration-on-Ak-or-Bk}
If $N_{k}$ increases sufficiently
rapidly then for all $\delta>0$ and all sufficiently large $k$,
if we set $\beta_k=2^{-3\ell_{k}}$ then: if $k$ is odd then
\begin{equation}
\mu_{\beta_k\varphi}\Big(\bigcup_{u\in B_{k}}[u]\Big)\geq1-\delta\label{eq:concentration-on-Bk}
\end{equation}
and if $k$ even then
\[
\mu_{\beta_k\varphi}\Big(\bigcup_{u\in A_{k}}[u]\Big)\geq1-\delta.
\]
\end{prop}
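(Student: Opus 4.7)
My plan is a variational argument combining the support-concentration of Lemmas~\ref{lem:concentration-on-Lk}--\ref{lem:decomposition-into-blocks} with an entropy count exploiting the disparity $|B_k|>|A_k|^{100}$ (Lemma~\ref{lem:block-counts}) and the pressure lower bound of Lemma~\ref{lem:Markov-approx-to-Gibbs}. Fix $k$ odd and (matching the earlier lemmas) $\beta_k=2^{3\ell_k}$. Let $P_B=\mu_{\beta_k\varphi}\bigl(\bigcup_{u\in B_k}[u]\bigr)$ and $P_A=\mu_{\beta_k\varphi}\bigl(\bigcup_{u\in A_k}[u]\bigr)$. Since the $c_k$-marker allows one to uniquely identify where $L_k$-blocks begin, the cylinder unions are disjoint, and by shift-invariance $\ell_k(P_A+P_B)$ is the asymptotic density of positions in a typical sequence covered by $L_k$-blocks, which Lemma~\ref{lem:concentration-on-Lk} forces to be at least $1-2^{-\ell_k}$. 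Setting $f_B=P_B/(P_A+P_B)$ for the limiting fraction of $B_k$-blocks in the decomposition, the goal reduces to showing $f_B\to 1$ as $k\to\infty$.

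The key step is an upper bound on $h(\mu_{\beta_k\varphi})$ via a block count. By Lemma~\ref{lem:decomposition-into-blocks}, for all large $n$ at least half the mass sits on $n$-sequences of the form $\diamond v_1\diamond\cdots\diamond v_m\diamond$ with $v_i\in L_k$ and total gap length at most $2^{-\ell_k}n$. Among the $m\le n/\ell_k$ blocks, the ergodic theorem gives that the proportion of $v_i\in B_k$ tends to $f_B$. Counting such sequences---choosing gap positions and contents, the labels $B_k$ vs.\ $A_k$ of the $v_i$, and the specific blocks---yields at most $2^{o(n)}\cdot\binom{m}{f_Bm}\cdot|B_k|^{f_Bm}|A_k|^{(1-f_B)m}$ configurations, so
\[
h(\mu_{\beta_k\varphi})\;\le\;\frac{f_B\log|B_k|+(1-f_B)\log|A_k|}{\ell_k}+o(1).
\]

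To close the argument, use $\varphi\le 0$ so that $P_{\beta_k}(\mu_{\beta_k\varphi})\le h(\mu_{\beta_k\varphi})$, and combine with Lemma~\ref{lem:Markov-approx-to-Gibbs}:
\[
\frac{\log|B_k|}{\ell_k}-o(1)\;\le\;\frac{f_B\log|B_k|+(1-f_B)\log|A_k|}{\ell_k}+o(1),
\]
hence $(1-f_B)\bigl(\log|B_k|-\log|A_k|\bigr)\le o(\ell_k)$. By Lemma~\ref{lem:block-counts}, $\log|B_k|-\log|A_k|\ge\tfrac{99}{100}\log|B_k|$, and provided the $N_k$ are chosen so that $\log|B_k|/\ell_k$ stays bounded below by a positive constant---easily arranged inductively since $\log|B_k|/\ell_k\to\log|B_{k-1}|/\ell_{k-1}$ as $N_k\to\infty$ for odd $k$---we conclude $1-f_B\to 0$. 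The even case is symmetric with $A_k,B_k$ swapped. The main obstacle is justifying the entropy upper bound: one must verify that all ``nuisance'' combinatorial contributions (gap placements, gap contents, boundary segments, and choice of phase of the block decomposition) really do fit into the $o(1)$ error at the scale $\log|B_k|/\ell_k$, which is where the rapid-growth assumption on $N_k$ is crucial.
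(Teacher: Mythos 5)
Your overall strategy is the paper's own: concentrate the mass on $L_k$-decomposable words (Lemmas \ref{lem:concentration-on-Lk}--\ref{lem:decomposition-into-blocks}), bound $h(\mu_{\beta_k\varphi})$ by counting gap placements, $A_k/B_k$ labels and block choices, and play this off against the pressure lower bound of Lemma \ref{lem:Markov-approx-to-Gibbs} together with $|B_k|>|A_k|^{100}$. (The paper phrases it as a contradiction with a fixed $\delta$ rather than showing $f_B\to1$, but that is cosmetic; your reading of $\beta_k=2^{3\ell_k}$ and your reinterpretation of the cylinder union via the block-decomposition density are the intended ones.)

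There is, however, a genuine flaw in your last step. You deduce $(1-f_B)\bigl(\log|B_k|-\log|A_k|\bigr)\le o(\ell_k)$ and then conclude $1-f_B\to0$ by asserting that $\log|B_k|/\ell_k$ stays bounded below by a positive constant. That assertion is false for this construction: while for \emph{odd} $k$ one has $\log|B_k|/\ell_k\to\log|B_{k-1}|/\ell_{k-1}$ as $N_k\to\infty$, at the intervening \emph{even} stages $|B_k|=|B_{k-1}|$ (the $B$-blocks are pure powers $c_kb^{N_k}$) while $\ell_k\approx N_k\ell_{k-1}$, so $\log|B_k|/\ell_k$ is divided by roughly $N_k$ at each even step. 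Hence along the odd integers $\log|B_k|/\ell_k$ decays like $\bigl(N_2N_4\cdots N_{k-1}\bigr)^{-1}\to0$ (consistent with $X$ having zero entropy), and no uniform positive lower bound exists. With it gone, ``$(1-f_B)\log|B_k|\le o(\ell_k)$'' gives nothing unless the $o(\ell_k)$ is quantified \emph{relative to} $\log|B_k|$.

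The correct way to close the argument --- and what the paper does --- is to demand not that $\log|B_k|/\ell_k$ be bounded below, but that every nuisance term be $o\bigl(\log|B_k|/\ell_k\bigr)$: explicitly, that $H(2^{-\ell_k})\big/\bigl(\log|B_k|/\ell_k\bigr)\to0$ and $2^{3\ell_k}2^{-\ell_k2^{\ell_k}}\big/\bigl(\log|B_k|/\ell_k\bigr)\to0$ along the odd $k$ (and symmetrically with $A_k$ along the even $k$). This \emph{can} be arranged inductively, precisely by the observation you already made: for fixed $k$ (hence fixed $|B_{k-1}|,\ell_{k-1}$), increasing $N_k$ drives the numerators to $0$ while the denominator tends to the fixed positive number $\log|B_{k-1}|/\ell_{k-1}$. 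So your proof is repaired by replacing the absolute lower bound on $\log|B_k|/\ell_k$ with this relative smallness condition on the error terms, imposed at each stage of the choice of $N_k$. One further small point: your binomial factor $\binom{m}{f_Bm}\le 2^{n/\ell_k}$ contributes $\tfrac{\log 2}{\ell_k}$ to the entropy rate, which is only negligible against $\tfrac{\log|B_k|}{\ell_k}$ because $|B_k|\to\infty$; this is true here but should be said.
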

\begin{proof}
We assume that $N_{k}$ increases rapidly enough for the previous
lemmas to hold and furthermore that, writing $H(t)=-t\log t-(1-t)\log(1-t)$,\[
\frac{H(2^{-\ell_{k}})}{\log|B_{k}|/\ell_{k}}\rightarrow0\]
and\[
\frac{2^{3\ell_{k}}2^{-\ell_{k}2^{\ell_{k}}}}{\log|B_{k}|/\ell_{k}}\rightarrow0\]
as $k\rightarrow\infty$ along the odd integers, and similarly, with
$A_{k}$ in place of $B_{k}$, as $k\rightarrow\infty$ along the
even integers. This condition is easily satisfied by choosing $N_{k}$
large enough at each stage, since for fixed $k$, as we increase $N_{k}$
the numerator decays to $0$ but the denominator does not. 

Under these hypotheses we establish the proposition for odd $k$,
the case of even $k$ being similar. Thus, we assume that $|B_{k}|>|A_{k}|^{100}$.
Fix $\delta>0$ and suppose that \eqref{eq:concentration-on-Bk} fails for some $k$.
For all large enough $n$ lemma \ref{lem:decomposition-into-blocks}
implies that at least half the mass of $\mu_{\beta_k\varphi}$ is concentrated
on points whose initial $n$-segment is of the form \eqref{eq:block-decomposition},
and, by the ergodic theorem and the assumed failure of \eqref{eq:concentration-on-Bk},
if $n$ is large then with $\mu_{\beta_k\varphi}$-probability approaching
$1$ the fraction of $v_{i}$'s that belong to $A_{k}$ in the decomposition
\eqref{eq:block-decomposition} is at least $\delta$. 

For such an $n$ we now perform a standard estimate to bound the entropy
of $\mu_{\beta_k\varphi}$. Applying e.g. Stirling's formula, the number of
different ways the $\diamond$'s can appear in $u$ is
\[
\leq\sum_{r<2^{-\ell_{k}}\cdot n}\binom{n}{r}\leq2^{H(2^{-\ell_{k}})n}.
\]
The positions of $\diamond$'s determines the positions of the $v_{i}$,
and given this, the number of ways to fill in the $v_{i}$ so that
at least a $\delta$-fraction of them come from $A_{k}$ is bounded
from above by
\[
\sum_{r=\delta n/\ell_{k}}^{n/\ell_{k}}|A_{k}|^{r}|B_{k}|^{n/\ell_{k}-r}\leq\frac{n}{\ell_{k}}\cdot|A_{k}|^{\delta n/\ell_{k}}|B_{k}|^{(1-\delta)n/\ell_{k}}.
\]
Using the bound $|A_{k}|\leq|B_{k}|^{1/100}$ and setting \[
\delta'=\delta\cdot\frac{99}{100}\]
we get
\[
\leq\frac{n}{\ell_{k}}\cdot|B_{k}|^{(1-\delta')n/\ell_{k}}.
\]
Thus, for arbitrarily large $n$, half the mass of $\mu_{\beta_k\varphi}$
is concentrated on a set $E_{k}\subseteq\{0,1\}^{n}$ of cardinality
\[
|E_{k}|\leq2^{nH(2^{-\ell_{k}})+\log n-\log\ell_{k}}\cdot2^{(1-\delta')n\log|B_{k}|/\ell_{k}}.
\]
It follows from this and the Shannon-McMillan theorem that
\[
h(\mu_{\beta_k\varphi})\leq(1-\delta')\frac{\log|B_{k}|}{\ell_{k}}+H(2^{-\ell_{k}}),
\]
hence, since $\varphi\leq0$, we have
\[
P_{\beta_k}(\mu_{\beta_k\varphi})\leq h(\mu_{\beta_k\varphi})\leq(1-\delta')\frac{\log|B_{k}|}{\ell_{k}}+H(2^{-\ell_{k}}).
\]

Substituting the lower bound from lemma \eqref{lem:Markov-approx-to-Gibbs},
we have
\[
\frac{\log|B_{k}|}{\ell_{k}}-2^{3\ell_{k}}2^{-\ell_{k-1}2^{\ell_{k}}}<
(1-\delta')\frac{\log|B_{k}|}{\ell_{k}}+H(2^{-\ell_{k}}).
\]
By our assumptions about the growth of $N_{k}$ the inequality above
is possible only for finitely many $k$. This completes the proof.
\end{proof}
We can now prove theorem \ref{thm:main-theorem}. For $\delta=\frac{1}{100}$
choose the sequence $N_{k}$ so that the conclusion of the last proposition
holds. Since the density of $0$'s in the blocks $a\in A_{k}$ is
$>\frac{2}{3}$ and the density in the blocks $b\in B_{k}$ is $<\frac{1}{3}$,
it follows that for $k$ large enough and $\beta_{k}=2^{-3\ell_{k}}$,
\begin{eqnarray*}
\mu_{\beta_{k}}([0]) & < & \frac{1}{3}-\delta\qquad\mbox{ if }k\mbox{ is odd}\\
\mu_{\beta_{k}}([0]) & > & \frac{2}{3}+\delta\qquad\mbox{ if }k\mbox{ is even}\end{eqnarray*}
Hence $(\mu_{\beta\varphi})_{\beta\geq0}$ does not weak-{*} converge.

\section{\label{sec:Remarks}Remarks}

\subsection{Topological dynamics of $X$}

In our example $X$ is minimal. Indeed, any block $a\in L_{k}$ appears
in $c_{k+1}$ and hence in every block in $L_{k+1}$, so $a$ appears
in $X$ with bounded gaps. Note that there are also minimal (non uniquely
ergodic) systems $X$ for which the zero-temperature limit exists.

One can easily modify the construction to endow $X$ with other dynamical
properties, e.g. one can make $X$ topologically mixing (our example
is not, in fact it has a periodic factor of order $5$). It is also simple
to obtain positive entropy of $X$ (and the
limiting measures): form the product of the given example with a full
shift.

\subsection{Measurable dynamics of the zero-temperature limits}

In our example, $(\mu_{\beta\varphi})_{\beta\geq0}$ has two ergodic accumulation
points, and one can show that the convex combinations of these two
are also accumulation points.

In general, the set of accumulation points need not contain ergodic
measures, even when the zero-temperature limit exists. This is true
even of locally constant potentials \cite{Leplaideur05,ChazottesGambaudoUgalde09},
and one can also construct examples which are simpler to analyze.
For example, if $X\subseteq\{0,1\}^{\mathbb{N}}$ is a subshift invariant
under involution $0\leftrightarrow1$ of $\{0,1\}^{\mathbb{N}}$,
and if $X$ has precisely two invariant measures $\mu',\mu''$ which
are exchanged by this involution, then for the potential $\varphi_{X}(y)=-d(y,X)$
we will have $\lim_{\beta\rightarrow+\infty}\mu_{\beta\varphi}=\frac{1}{2}\mu'+\frac{1}{2}\mu''$. 

The set of accumulation points also need not be convex. Using the
same scheme as above one can construct a subshift $X\subseteq\{1,2,3\}^{\mathbb{Z}}$
with three invariant measures $\mu^{(i)},i=1,2,3$, by maintaining
three sets of blocks $A_{k},B_{k},C_{k}$ at each stage (rather than
two). At each step of the construction we choose the smallest of the
sets and concatenate its blocks freely, but concatenate the blocks
of the others in a constrained way, so that at the next stage the
sizes of the selected set is much larger than the other two, which
have not changed much in relative size. For each $n$ there are always
two sets (the two which are not growing very much at that stage) for
which the number of $n$-blocks in one is much greater than in the
other. Thus the Gibbs measures at the appropriate scale will have
very small contributions from the smaller of these sets, and the accumulation
points of $\mu_{\beta\varphi}$ will lie near the boundary of the simplex
spanned by the $\mu^{(i)}$ (in our example there were only two sets
and at each step one grew at the expense of the other; thus the relative
number of $n$- blocks achieved all intermediate ratios). 

Regarding the ergodic nature of the accumulation points, the same
periodicity of order five that obstructs topological mixing causes
the ergodic invariant measures on $X$ (i.e. the ergodic zero-temperature
limits) to have $e^{-2\pi i/5}$ in their spectrum, but this can be
avoided by introducing spacers into the construction. In this way
one can make the limiting ergodic measures weak or strong mixing,
and possibly $K$. 

Finally, we have the following variant of Radin's argument from \cite{Radin91}.
Let $\mu$ be an ergodic probability measure for some measurable transformation
of a Borel space, and $h(\mu)<\infty$. By the Jewett-Krieger theorem
\cite{DenkerGrillenbergerSigmund76} there is a subshift $X$ on at
most $h(\mu)+1$ symbols whose unique shift-invariant measure $\nu$
is isomorphic to $\mu$ in the ergodic theory sense. For the potential
$\varphi_{X}$, all accumulation points of $\mu_{\beta\varphi}$ are invariant
measures on $X$, so they all equal $\nu$; thus $\mu_{\beta\varphi}\rightarrow\nu$
as $\beta\rightarrow+\infty$. This shows that the zero-temperature
limit of Gibbs measures can have arbitrary isomorphism type, subject
to the finite entropy constraint, and raises the analogous question
for divergent potentials: 
\begin{problem*}
Given arbitrary ergodic measures $\mu',\mu''$ of the same finite
entropy, can one construct a H\"{o}lder potential $\varphi$ whose
Gibbs measures $\mu_{\beta\varphi}$ have two ergodic accumulation points
as $\beta\rightarrow+\infty$, isomorphic respectively to $\mu',\mu''$? 
\end{problem*}

\subsection{Maximization of marginal entropy}

Let $\varphi$ be a H\"{o}lder potential and $\mathcal{M}$ the set
of invariant probability measures $\mu$ for which $\int\varphi d\mu$
is maximal. It is known that if $\mu$ is an accumulation point of
$(\mu_{\beta\varphi})_{\beta>0}$ then $\mu\in\mathcal{M}$ and furthermore
$\mu$ maximizes $h(\mu)$ subject to this condition. 

In the example constructed above the potential $\varphi$ had two
$\varphi$-maximizing ergodic measures $\mu',\mu''$, and the key
property that we utilized was that their marginals at certain scales
had sufficiently different entropies. In fact, the measure maximizing
the marginal entropy on $\{0,1\}^{n}$ for certain $n$ was alternately
very close to $\mu'$ and to $\mu''$. 

It is an interesting question if such a connection between zero-temperature
convergence and marginal entropy exists in general. Let $\varphi$
be a H\"{o}lder potential, and for each $n$ let $\mathcal{M}_{n}^{*}$
denote the set of marginal distributions produced by restricting $\mu\in\mathcal{M}$
to $\{0,1\}^{n}$. The entropy function $H(\cdot)$ is strictly concave
on $\mathcal{M}_{n}^{*}$, and therefore there is a unique $\mu_{n}^{*}\in\mathcal{M}_{n}^{*}$
maximizing the entropy function. Let
\[
\mathcal{M}_{n}=\{\mu\in\mathcal{M}\,:\,\mu|_{\{0,1\}^{n}}=\mu_{n}^{*}\}.
\]
This is the set of $\varphi$-maximizing measures which maximize entropy
on $n$-blocks. Note that the diameter of $\mathcal{M}_{n}$ tends
to $0$ as $n\rightarrow\infty$ in any weak-{*} compatible metric.
Hence we can interpret $\mathcal{M}_{n}\rightarrow\mu$ in the obvious
way. 
\begin{problem*}
Is the existence of a zero-temperature limit for $\varphi$ equivalent
to existence of $\lim\mathcal{M}_{n}$? More generally, do $(\mu_{\beta\varphi})_{\beta\geq0}$
and $(\mathcal{M}_{n})_{n\geq0}$ have the same accumulation points?
\end{problem*}

\section{\label{sec:The-multidimensional-case}The multidimensional case}

In this section we apply the main theorem of \cite{Hochman09} to
obtain a locally constant potential (i.e. a finite-range interaction) in dimension $d\geq3$ such that
any associated family of equilibrium measures does not converge as $\beta\rightarrow+\infty$.
Our methods do not work in $d=2$, because the results of \cite{Hochman09} are not known in that case,
but probably a more direct construction is possible.

\subsection{\label{sub:SFTs-and-subdynamics}SFTs and their subdynamics }

The metric on $\{0,1\}^{\mathbb{Z}^{d}}$ is defined by%
\footnote{The dimension of the ambient space is also denoted $d$ but no confusion
should arise.%
}\[
d(x,y)=2^{-\min\{\left\Vert u\right\Vert \,:\, x(u)\neq y(u)\}}\]
where $\left\Vert \cdot\right\Vert $ is the sup-norm. We denote by
$T$ the shift action on $\{0,1\}^{\mathbb{Z}^{d}}$ and write $T_{1},\ldots,T_{d}$
for its generators.

Let \[
E_{n}=\{-n,\ldots,0,,\ldots,n\}^{d}\]
denote the discrete $d$-dimensional cube of side $2n+1$. A subshift
$X$ is a shift of finite type (SFT) if there is an $n$ and finite
set of patterns $L\subseteq\{0,1\}^{E_{n}}$ such that
\[
X=\{x\in\{0,1\}^{\mathbb{Z}^{d}}\,:\,\mbox{no pattern from }L\mbox{ appears in }x\}.
\]
(Note: here $L$ determines the forbidden patterns, which is the opposite
of its role in $\left\langle L\right\rangle$.) A pattern $a$ is
said to be \emph{locally admissible }if it does not contain any patterns
from $L$; it is \emph{globally admissible } if it appears in $X$,
i.e. it can be extended to a configuration on all of $\mathbb{Z}^{d}$
which does not contain patterns from $L$. These two notions are distinct,
and it is formally impossible to decide in general, given $L$, whether
a locally admissible word is globally admissible.

If we write \begin{equation}
\varphi_{L}(y)=\left\{ \begin{array}{cc}
-1 & y|_{E_{n}}\in L\\
0 & \mbox{otherwise}\end{array}\right.\label{eq:SFT-potential}\end{equation}
then every invariant measure $\mu$ on $\{0,1\}^{\mathbb{Z}^{k}}$
satisfies $\int\varphi_{L}d\mu\leq0$ with equality if and only if
$\mu$ is supported on $X$. Thus for any SFT $X$ there is a locally
constant potential whose maximizing measures are precisely the invariant
measures on $X$.

Given a subshift $X\subseteq\{0,1\}^{\mathbb{Z}^{d}}$, we may consider
the restricted one-parameter action of $T_{1}$ on $X$. We shall
say that the topological dynamical system $(X,T_{1})$ is a (one-dimensional)
subaction of $(X,T)$. To each partition $C=\{C_{1},\ldots,C_{m}\}$
of $\{0,1\}^{\mathbb{Z}^{d}}$ into closed and open sets we associate
to each $x\in X$ its itinerary $x^{C}$  given by the action of $T_{1}$
and the partition $C$, i.e. $x\mapsto x^{C}\in\{1,\ldots,m\}^{\mathbb{Z}}$
is defined by
\[
x^{C}(i)=j\mbox{ if and only if }T_{1}^{i}x\in C_{j}.
\]
The subshift
\[
X^{C}=\{x^{C}\,:\, x\in X\}\subseteq\{1,\ldots,m\}^{\mathbb{Z}}
\]
is a factor, in the sense of topological dynamics, of the subaction
$(X,T_{1})$.

For a subshift $Y\subseteq\{0,1\}^{\mathbb{Z}}$ write $L_{k}(Y)\subseteq\{0,1\}^{k}$
for the set of $k$-blocks appearing in $Y$; note that for any sequence
$k(i)\rightarrow\infty$ the sets $L_{k(i)}$, $i=1,2,\ldots$, determine $Y$.

The main result of \cite{Hochman09} says that the subaction of SFTs
can be made to look like an arbitrary subshift, as long as the subshift
is constructive in a certain formal sense. The version we need is
the following:
\begin{thm*}
\label{thm:subdynamics}Let $A$ be an algorithm that for each $i$
computes%
\footnote{A stronger statement can be made in which the computability is replaced
with semi-computability of an appropriate family of blocks, and then
one obtains (nearly) a characterization; but we do not need this here.%
} an integer $n(i)$ and a set $L_{i}\subseteq\{0,1,\ldots,r\}^{n(i)}$
such that $\left\langle L_{i}\right\rangle \supseteq\left\langle L_{i+1}\right\rangle $.
Then there is an alphabet $\Sigma$, an SFT $X\subseteq\Sigma^{\mathbb{Z}^{3}}$
of entropy $0$ and a closed and open partition $C=\{C_{0},C_{1},\ldots,C_{r}\}$
of $\Sigma^{\mathbb{Z}^{3}}$ such that $L_{n(i)}(X^{C})=L_{i}$,
and consequently $X^{C}=\cap\left\langle L_{i}\right\rangle $. Furthermore,
the partition elements $C_{i}$ can be made invariant under the shifts
$T_{2}$ and $T_{3}$.
\end{thm*}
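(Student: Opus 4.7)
The plan is to construct a three-dimensional SFT $X$ whose $T_1$-subaction, read through a partition $C$ determined by a single ``marker'' coordinate, realizes $\bigcap_i \langle L_i\rangle$. The alphabet $\Sigma$ is taken to be a product of a marker symbol in $\{0,1,\dots,r\}$ together with several auxiliary layers. To force $C$ to be $T_2$- and $T_3$-invariant, I impose the local rule that the marker value at a site equals the marker value at each $T_2$- and $T_3$-neighbor, so that the marker layer is constant along every $T_2$- and $T_3$-line. Then defining $C_j = \{x : \text{marker of } x(0,0,0) = j\}$ makes the partition automatically invariant under $T_2$ and $T_3$, and the non-trivial information in $X^C$ lives purely in the $T_1$-direction.

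The heart of the construction is a hierarchical scaffold, in the spirit of the Robinson and Mozes constructions, organizing $\mathbb{Z}^3$ into nested cubical blocks of exponentially growing side lengths $s_k$. Within each level-$k$ block I would reserve a region large enough to run a universal Turing machine, on the description of the algorithm $A$, for at least $k$ steps. After those steps the simulation has written, on an accessible tape region, the pair $(n(k), L_k)$. Local SFT rules then enforce the compatibility condition: every $n(k)$-window of the marker layer that lies entirely inside a level-$k$ block must agree with some codeword read off the $L_k$-tape. Since each site of $\mathbb{Z}^3$ is eventually contained in blocks of every level, this yields $L_{n(k)}(X^C) \subseteq L_k$. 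For the reverse direction, I would take any $y \in \bigcap_i \langle L_i \rangle$, extend it constantly in the $T_2, T_3$ directions to obtain a candidate marker layer, and fill in the scaffolding and computation layers using the Robinson/Mozes flexibility; the nesting $\langle L_{i+1}\rangle \subseteq \langle L_i\rangle$ guarantees that compatible choices can be made at every level, so that every $L_k$-word actually appears. Zero entropy would follow by counting legal $E_N$-patterns: the scaffold is unique up to $\mathbb{Z}^3$-shift, the computation is deterministic, and the marker layer contributes only $r^{O(N)}$ configurations because it is constant along two directions, subexponential in $|E_N| = (2N+1)^3$.

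The main obstacle is engineering the local SFT rules so that the three layers---scaffold, computation, and marker---synchronize correctly while leaving exactly the right amount of flexibility. The compatibility check between the marker and the computed $L_k$ must be distributed across the level-$k$ block because both $n(k)$ and $|L_k|$ can vastly exceed any fixed window radius; the rule must be strict enough to forbid $n(k)$-words outside $L_k$ and loose enough to admit every $y \in \bigcap_i\langle L_i\rangle$. Propagating the computed tape content out to all marker sites in the level-$k$ block, while keeping the rule strictly local and preventing spurious valid configurations from emerging (which would break zero entropy or enlarge $X^C$), is the delicate combinatorial core and is precisely what the machinery of \cite{Hochman09} is designed to handle; the restriction to $d \geq 3$ enters here because two transverse dimensions are needed to accommodate both the computational simulation and the hierarchical scaffold.
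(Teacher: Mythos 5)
This statement is not proved in the paper at all: it is quoted as ``the main result of \cite{Hochman09}'' and used as a black box, so there is no internal proof to compare yours against. Judged on its own terms, your proposal correctly identifies the strategy of \cite{Hochman09} --- a hierarchical Robinson/Mozes-type scaffold carrying a simulation of the universal Turing machine, a marker layer forced to be constant along the $T_2$- and $T_3$-directions (which is exactly how the $T_2,T_3$-invariance of $C$ and the subexponential pattern count, hence zero entropy, are obtained), and local rules checking the $T_1$-itinerary against the computed $L_i$. But as a proof it has genuine gaps, and the largest one is the step you yourself flag as ``precisely what the machinery of \cite{Hochman09} is designed to handle'': the distributed comparison of an $n(k)$-window of the marker layer (with $n(k)$ and $|L_k|$ unbounded) against the computed list $L_k$ using only finitely many local rules, without creating spurious admissible configurations. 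That is the entire content of the theorem, so deferring it means the argument is circular at the critical point.

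Two further concrete problems. First, you have the level-$k$ block run the machine ``for at least $k$ steps,'' after which ``the simulation has written $(n(k),L_k)$.'' There is no a priori bound on the running time $\tau_k$ of $A$ on input $k$ in terms of $k$, so the SFT rules cannot know in advance how large a region is needed; the construction must instead let each region run the simulation as long as its size permits and enforce only the constraints for those inputs whose computation has terminated. This is not a cosmetic point in the present paper: Theorem \ref{thm:subdynamics-refined} makes the radius $R_i$ an explicit computable function of $\tau_1,\dots,\tau_i$, and the algorithm of section \ref{sub:Completing-the-construction} deliberately feeds its own running times back into the choice of $N_k$. Second, zero entropy does not follow merely from ``the scaffold is unique up to shift and the computation is deterministic'': hierarchical SFTs of this kind admit degenerate limit configurations (infinite hierarchies, misaligned boundary points), and one must verify both that these contribute no entropy and that they do not enlarge $X^C$ beyond $\bigcap_i\langle L_i\rangle$; likewise the exact equality $L_{n(i)}(X^C)=L_i$ (not just $\subseteq$) requires checking that every legal marker sequence extends to a full point of $X$ across all levels simultaneously, which is where the nesting hypothesis $\langle L_{i+1}\rangle\subseteq\langle L_i\rangle$ must actually be used rather than merely invoked.
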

To apply this one usually begins with a subshift $Y$ which has
been constructed in some explicit manner, and a computable sequence
$n(i)$ (e.g. $n(i)=i$), and derives an algorithm which from $i$
computes $L_{n(i)}(Y)$; one then gets an SFT $X$ and partition $C$
so that $X^{C}=Y$. This means that for all practical nearly purposes (e.g.
the construction of counterexamples) one can realize arbitrary dynamics
as the subdynamics of an SFT.\footnote{Nevertheless, one should bear in mind that the family of SFTs (and the set of algorithms) is countable.}

From the result for dimension $d=3$ the same is easily seen to hold
for $d\geq3$, but it is not known whether this holds in dimension
$d=2$.

\subsection{\label{sub:A-modified-one-dim-construction}A modified one-dimensional
example}

For notational convenience, for the rest of the paper we concentrate
on the case $d=3$, the general case being similar.

Realizing a specific subshift (such as the one from section \ref{sec:Construction-of-X})
as the subaction of an SFT $X$ does not in itself give good control
over the equilibrium measures of $\varphi_{X}$ or $\varphi_{L}$. Indeed,
the size of $L_{n}(X^{C})$ is exponential in $n$, which implies
similar growth of the corresponding set $L_{n}(X)$, but does not
guarantee exponential growth in $n^{3}$, which is the appropriate
scale for 3-dimensional subshifts. Thus for example we can have $h(X^{C})>0$
but $h(X)=0$. 

In order to use subactions to control entropy of the full $\mathbb{Z}^{3}$
action we rely on a trick by which the frequency of symbols in $X^{C}$
can be used to control pattern counts in a certain extension of $X$.
This approach was used in \cite{HochmanMeyerovitch07,BoyleSchraudner09}. 

We begin by modifying the main example of this paper so as to control
frequencies rather than block counts. We define a sequence of integers
$\ell_{k}$ and sets of blocks $A_{k},B_{k}\subseteq\{0,1,2\}^{\ell_{k}}$
by induction, using an auxiliary sequence $N_{1},N_{2}\ldots$ of
integers.

Start with $\ell_{0}=2$ and $A_{0}=\{00,01\}$, $B_{0}=\{00,02\}$.
Next, given $k$ define
\begin{eqnarray*}
A_{k} & = & \{a^{1+2^{\ell_{k-1}}+N_{k}}\,:\, a\in A_{k-1}\}\\
B_{k} & = & \{b^{1+2^{\ell_{k-1}}}2^{N_{k}\ell_{k-1}}\,:\, b\in B_{k-1}\}
\end{eqnarray*}
and for $k$ even define
\begin{eqnarray*}
A_{k} & = & \{a^{1+2^{\ell_{k-1}}}1^{N_{k}\ell_{k-1}}\,:\, a\in A_{k-1}\}\\
B_{k} & = & \{b^{1+2^{\ell_{k-1}}+N_{k}}\,:\, b\in B_{k}\}.
\end{eqnarray*}
Let $\ell_{k}$ be the common length of blocks in the sets above,
i.e. $\ell_{k}=\ell_{k-1}(2^{\ell_{k-1}+1}+N_{k})$. Note that $1^{\ell_{k}}\in A_{k}$
and $2^{\ell_{k}}\in B_{k}$.

As $k\rightarrow\infty$ the frequency of $0$'s in the blocks of
$A_{k},B_{k}$ tends to $0$, and the frequency of $1$'s and $2's$
tends, respectively, to $1$, and we can control the relative speed
at which they do so. More precisely, there is a function $\widetilde{N}_{k}(\cdot)$
such that given $N_{1},\ldots,N_{k-1}$ and $N_{k}\geq\widetilde{N}_{k}(N_{1},\ldots,N_{k-1})$
we have
\begin{eqnarray*}
f_{0}(a) & > & 100f_{0}(b)\qquad\mbox{for }k\mbox{ odd, }a\in A_{k},b\in B_{k}\\
f_{0}(b) & > & 100f_{0}(a)\qquad\mbox{for }k\mbox{ even, }a\in A_{k},b\in B_{k}.
\end{eqnarray*}
(Recall that $f_{0}(x)$ is the frequency of the symbol $0$ in $x$.)

Define
\[
Y=\bigcap_{k=1}^{\infty}\left\langle A_{k}\cup B_{k}\right\rangle.
\]
Similarly define
\[
Y_{1}=\bigcap\left\langle A_{k}\right\rangle
\]
and
\[
Y_{2}=\bigcap\left\langle B_{k}\right\rangle.
\]
(notice that these are decreasing intersections). Note that the only invariant
measures on $Y$ are the point masses at the fixed points $1^{\infty}\in Y_{1}$
and $2^{\infty}\in Y_{2}$. We denote \begin{equation}
\ell'_{k}=\ell_{k-1}\left((|A_{K}|+|B_{k}|)^{M_{k}}+\widetilde{N}_{k}(N_{1}\ldots N_{k-1})\right)\label{eq:ell-prime}\end{equation}
(so $\ell_{k}\geq\ell'_{k}$) and note that as long as $N_{k}\geq\widetilde{N}_{k}(N_{1},\ldots,N_{k-1})$,
the set $L_{\ell'_{k}}(Y)$ is in fact independent of $N_{k}$ and
depends only on $N_{1},\ldots,N_{k-1}$. We also note that $\widetilde{N}_{k}$
can be computed explicitly, and in particular the function $(k,N_{1},\ldots,N_{k-1})\mapsto\widetilde{N}_{k}(N_{1},\ldots,N_{k-1})$
is a formally computable function.

\subsection{\label{sub:Controlling-pattern-counts}Controlling pattern counts
in a 3-dimensional SFT}

We now incorporate the subshift $Y$ constructed above into a 3-dimensional
SFT and use the control over the frequency of symbols in $Y$ to gain
control of the pattern counts of an associated SFT.

First, some notation: for a subshift $X\subseteq\Sigma^{\mathbb{Z}^{3}}$
write \[
L_{n}(X)=\{x|_{E_{n}}\,:\, x\in X\}\subseteq\Sigma^{E_{n}}\]
where $E_{n}=\{-n,\ldots,n\}^{3}$. This is the same notation we used
for one-dimensional subshifts, but the meaning will be clear from
the context. We remark that if the (topological) entropy of $X$ is
$0$ then $|L_{n}(X)|=o(|E_{n}|)$.

Apply theorem \ref{thm:subdynamics} to $Y$ (or, rather, to an algorithm
that computes a sequence $L_{n(k)}(Y)$; we shall be more precise
later about the algorithm used). We obtain a zero-entropy SFT $X\subseteq\Sigma^{\mathbb{Z}^{3}}$
and $C=\{C_{0},C_{1},C_{2}\}$ a $T_{2},T_{3}$-invariant partition
so that $X^{C}=Y$. 

Next, for $x\in X$ and $u=(u_{1},u_{2},u_{3})\in\mathbb{Z}^{3}$,
if $x^{C}(u_{1})=0$ (i.e. if $T_{1}^{u_{1}}x\in C_{0})$ we {}``color''
the site with one of the two colors $0',0''$. Otherwise we leave
it {}``blank''. Collect all such colorings into a new subshift $\widehat{X}$.
Formally, $\widehat{X}\subseteq X\times\{0',0'',\mbox{blank}\}^{\mathbb{Z}^{3}}$
is defined by
\[
\widehat{X}=\{(x,y)\in X\times\{0',0'',\mbox{blank}\}\,:\, y(u)=\mbox{blank if }x^{C}(u_{1})\neq0\}.
\]
For $x=(x_{1},x_{2})\in\widehat{X}$ we also write $x^{C}$ instead
of $x_{1}^{C}$. One may verify that $\widehat{X}$ is an SFT. We
write $\widehat{\Sigma}=\Sigma\times\{0',0'',\mbox{blank}\}$ for
the alphabet of $\widehat{X}$ and write $\widehat{L}$ for the finite
set of patterns whose exclusion defines $\widehat{X}$. We may assume
that if a pattern over $\widehat{\Sigma}$ is locally admissible for
$\widehat{L}$ then the pattern induced from its first component is
locally admissible for $L$. 

Notice that, since $C_{0},C_{1},C_{2}$ are invariant under $T_{2},T_{3}$,
the pattern of symbols $0',0''$ in a point $x\in\widehat{X}$ is
the union of affine planes whose direction is spanned by $(0,1,0),(0,0,1)$.
The sequence of coordinates at which these planes intersect the $x$-axis
corresponds to the location of $0$-s in $x^{C}$, and on each plane
the symbols $0',0''$ are distributed as randomly as possible, i.e.
given the arrangement of affine planes there is no restriction on
the combinations of $0',0''$ that may appear in them. It follows
that if $a\in\{0,1,2\}^{\{-n,\ldots,n\}}$ is a block in $Y$ then
\[
\#\{(x,y)|_{E_{n}}\,:\,(x,y)\in\widehat{X}\mbox{ and }x^{C}|_{\{-n,\ldots,n\}}=a\}=2^{f_{0}(a)|E_{n}|+o(|E_{n}|)}.
\]
(the term $o(|E_{n}|)$ comes from the pattern growth of $X$, which has entropy $0$). 

Write
\begin{eqnarray*}
\widehat{X}_{1} & = & \{x\in\widehat{X}\,:\, x^{C}\in Y_{1}\}\\
\widehat{X}_{2} & = & \{x\in\widehat{X}\,:\, x^{C}\in Y_{2}\}.
\end{eqnarray*}
Then, for $k$ large enough, the frequency gap between blocks in $A_{k}$
and $B_{k}$ translates into
\begin{eqnarray*}
|L_{\ell_{k}}(\widehat{X}_{1})| & > & |L_{\ell_{k}}(\widehat{X}_{2})|^{1/10}\qquad k\mbox{ odd}\\
|L_{\ell_{k}}(\widehat{X}_{2})| & > & |L_{\ell_{k}}(\widehat{X}_{1})|^{1/10}\qquad k\mbox{ even}.\end{eqnarray*}
Compare this with lemma \ref{lem:block-counts}.

\subsection{\label{sub:Local-versus-global-admissibility}Local versus global
admissibility}

For $\varphi=\varphi_{\widehat{X}}$, i.e. $\varphi(y)=-d(y,\widehat{X})$,
one can adapt the analysis in section \ref{sec:Analysis} and show
that $\mu_{\beta\varphi}$ does not have a limit as $\beta\rightarrow+\infty$.
Let us review this argument. Fix $\beta=2^{-3\ell_{k}}$, and set
$p=1,2$ according to whether $k$ is odd or even, and write $q=2-p$
for the other index. First, an in Lemma \ref{lem:Markov-approx-to-Gibbs},
we prove a lower bound on $P_{\beta}(\mu_{\beta\varphi})$ by constructing
a measure $\nu_{k}$ whose blocks (i.e. square patterns) are overwhelmingly
drawn from $L_{\ell_{k}}(X_{p})$, making it nearly $\varphi_{\widehat{X}}$-maximizing,
and with entropy close to $\frac{1}{|E_{\ell_{k}}|}|L_{\ell_{k}}(\widehat{X}_{p})|$.
This forces the entropy of $\mu_{\beta\varphi}$ to be similar. Second, we
use the fact that most of the mass of $\mu_{\beta\varphi}$ concentrates on blocks from
$L_{\ell_{k}}(\widehat{X})$ and the fact that $L_{\ell_{k}}(\widehat{X}_{p})\gg L_{\ell_{k}}(\widehat{X}_{q})$
to deduce that in order for $\mu_{\beta\varphi}$ to have entropy near $\frac{1}{|E_{\ell_{k}}|}|L_{\ell_{k}}(\widehat{X}_{p})|$,
it must be mostly concentrated on $\widehat{X}_{p}$. This argument
is similar to that in proposition \ref{concentration-on-Ak-or-Bk}.

We are now interested in proving the same thing for the potential
$\varphi_{\widehat{L}}$ (given in \eqref{eq:SFT-potential}) instead
of $\varphi_{\widehat{X}}$. The first part of the analysis above
carries over with only minor modifications.

However, the second part runs into difficulties. Notice that $\int\varphi_{\widehat{X}}d\mu\approx0$
implies that nearly all the $\mu_{\beta\varphi}$-mass is concentrated on
patterns in $L_{\ell_{k}}(\widehat{X})$, but $\int\varphi_{\widehat{L}}d\mu\approx0$
tells us only that $\mu_{\beta\varphi}$-most blocks on $E_{\ell_{k}}$ are
\emph{locally }admissible for $\widehat{L}$; they do not have to
be \emph{globally }admissible, giving us little control of their structure. 

To pull things through, we will make use of the following observation:
it is not necessary for us to know that most of the mass of $\mu_{\beta\varphi}$
concentrates on $L_{\ell_{k}}(\widehat{X})$. Instead, it suffices
that it concentrates on $L_{\ell'_{k}}(\widehat{X})$, where $\ell'$
is as in equation \eqref{eq:ell-prime}. This is because $L_{\ell'_{k}}(\widehat{X}_{p})$
is already much larger than $L_{\ell'_{k}}(\widehat{X}_{q})$, so
we can argue as in the first part of the proof of proposition \ref{concentration-on-Ak-or-Bk}. 

Thus, to complete the construction we want to ensure that if a block
$a\in\Sigma^{E_{\ell_{k}}}$ is locally admissible then $a|_{E_{\ell'_{k}}}$
is globally admissible, i.e. belongs to $L_{\ell'_{k}}(\widehat{X})$. 

A simple compactness argument establishes the following general fact:
For any SFT and $m\in\mathbb{N}$ there is an $R$ so that if $b\in\Sigma^{E_{R}}$
is locally admissible then $b|_{E_{m}}$ is globally admissible. In
general, however, $R$ depends in a very complicated way on both the
SFT and $m$, and in fact is not formally computable given these parameters.
For our purposes we require finer control than this. Luckily, an inspection
of the proof in \cite{Hochman09} gives the following:
\begin{thm}
\label{thm:subdynamics-refined}Let $A$ be an algorithm that from
$i$ computes $n(i)\in\mathbb{N}$ and $L_{i}\subseteq\{0,1,\ldots,r\}^{n(i)}$
such that $\left\langle L_{i}\right\rangle \supseteq\left\langle L_{i+1}\right\rangle $.
Denote by $\tau_{i}$ the number of time-steps required for the computation
on input $i$. Then the SFT $X$ from theorem \ref{thm:subdynamics}
can be chosen so that, for $R_{i}=R_{i}(|A|,\tau_{1},\ldots,\tau_{i})$
, if $a\in\Sigma^{E_{R_{i}}}$ is locally admissible then $a|_{E_{n(i)}}$
is globally admissible, and furthermore the function $R_{i}(\ldots)$
is computable. Here $\tau_{i}$ and $A$ are taken with respect to
some fixed universal Turing machine.
\end{thm}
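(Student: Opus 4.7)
The plan is to revisit the construction of $X$ given in \cite{Hochman09} and extract the dependence of the local-implies-global admissibility radius on the algorithmic data $(|A|, \tau_1, \ldots, \tau_i)$. Recall that the SFT produced there forces every configuration to carry a Robinson-like hierarchical self-similar tiling, with macro-tiles at every level $k \geq 0$; the local rules within a level-$k$ macro-tile simulate $\tau_k$ steps of a universal Turing machine running $A$ on input $k$, and the computed output $L_k$ is then enforced as a list of forbidden subpatterns inside the appropriate slice of the macro-tile. The geometric parameters of the construction (in particular the side length $s_k$ of a level-$k$ macro-tile) are explicit computable functions of $|A|$ and $\tau_1, \ldots, \tau_k$.

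First, for a given $i$, I would define $k^*(i)$ as the smallest level at which a macro-tile is large enough to accommodate the simulation producing $L_i$ and to contain a central slice of side at least $n(i)$ on its subaction axis; this is computable from $|A|, \tau_1, \ldots, \tau_i$. I would then set $R_i$ to be a bounded constant multiple of $s_{k^*(i)}$, large enough that any window $E_{R_i}$ is guaranteed to contain a complete level-$k^*(i)$ macro-tile together with a margin sufficient for its boundary rules to be visible. Computability of $R_i$ in its arguments then follows from computability of $s_{k^*(i)}$.

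Second, the heart of the argument is rigidity: in the Robinson-style hierarchy used in \cite{Hochman09}, local admissibility on a sufficiently large window forces the window to decompose according to the hierarchy up to the appropriate level. This is the content of the marker/signaling rules of the construction and is implicit in the original proof. Applying this to $a \in \Sigma^{E_{R_i}}$, the pattern must contain a fully-formed level-$k^*(i)$ macro-tile whose internal Turing-machine simulation is executed correctly, and hence whose central subaction slice is consistent with $L_i$. Third, using the inclusions $\langle L_i \rangle \supseteq \langle L_{i+1} \rangle \supseteq \cdots$ and compactness, one extends the central slice of $a^C$ to a point of $X^C = \bigcap_j \langle L_j \rangle$, then lifts this extension to a full configuration of $X$ by growing the Robinson hierarchy outward and filling the surrounding macro-tiles consistently. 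This shows that $a|_{E_{n(i)}}$ is globally admissible.

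The main obstacle is step two: one must verify that the rigidity of the hierarchical rules survives the coupling with the simulation layer and with the enforcement of $L_k$, so that local admissibility alone — without any global hypothesis — already forces the macro-tile structure inside a window of side $R_i$. The ingredients are present in \cite{Hochman09}, and what is new here is the bookkeeping: tracking how the time and description sizes $|A|, \tau_1, \ldots, \tau_i$ propagate through the layered Robinson--Turing construction to yield an explicit, computable bound $R_i$. Once this dependency is made precise, no further ideas are required.
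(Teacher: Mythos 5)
The paper offers no proof of this theorem: it is stated as following from ``an inspection of the proof in \cite{Hochman09},'' with no further argument supplied. Your sketch is a faithful outline of what that inspection involves --- macro-tile side lengths that are computable functions of $|A|$ and the running times $\tau_1,\ldots,\tau_i$, rigidity of the hierarchical structure under local admissibility alone, and extension of the forced central pattern to a global configuration --- and so it matches the intended (unwritten) argument; the one point you gloss over is that global admissibility must be established for the full $\Sigma$-pattern on $E_{n(i)}$, scaffolding and computation layers included, not merely for its subaction itinerary, but this follows from the same rigidity that forces the hierarchy in the first place.
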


\subsection{\label{sub:Completing-the-construction}Completing the construction:
The fine print}

We now specify an algorithm $A$ which, given $i$, computes sequences
$n(i)\in\mathbb{N}$ and $L_{i}\subseteq\{0,1,2\}^{n(i)}$ so that
$\left\langle L_{i}\right\rangle \supseteq\left\langle L_{i+1}\right\rangle $.
The even elements $n(2k)$ are the lengths $\ell_{k}$ associated
to a sequence $N_{k}$ in the construction in section \ref{sub:A-modified-one-dim-construction},
i.e.
\[
N_{k}=\frac{n(2k)-n(2k-2)\cdot2^{1+n(2k-2)}}{n(2k-2)}\cdot
\]
The odd elements of the sequence are
\[
n(2k-1)=\ell'_{k}=\ell_{k-1}\widetilde{N}_{k}(N_{1},\ldots,N_{k-1}).
\]
Note that, having determined $n(i)$, the blocks in $Y$ of length
$n(i)$ depend only on $N_{1},\ldots,N_{[n(i)/2]}$ and not on any
future choices of parameters of the construction. Hence $L_{i}=L_{n(i)}(Y)$
is well defined given $n(1),\ldots,n(i)$ and may be computed from
this data. Thus at the $i$-th stage of the computation we will write
$L_{n(i)}(Y)$ even though strictly speaking $Y$ is not yet defined.

On input $i$ the algorithm is as follows. 
\begin{description}
\item [{Case~0}] $i=1$. Output
\begin{eqnarray*}
n(1) & = & 1\\
L_{1} & = & \{0,1,2\}.
\end{eqnarray*}

\item [{Case~1}] $i=2k-1$. Recursively compute $N_{1},\ldots,N_{k-1}$,
and output
\begin{eqnarray*}
n(i) & = & \ell'_{k}\qquad=\qquad\ell_{k-1}\widetilde{N}_{k}(N_{1},\ldots,N_{k-1})\\
L_{i} & = & L_{n(i)}(Y).
\end{eqnarray*}

\item [{Case~2}] $i=2k$. Recursively compute $N_{m}$,$m<k$ and
the time $\tau_{1},\ldots,\tau_{i-1}$ spent by the algorithm when
run on each of the inputs $j=1,\ldots,i-1$. Let
\begin{eqnarray*}
N_{k} & = & \left(\max\{n(i-1),R(|A|,\tau_{1},\ldots,\tau_{i-1})\}\right)^{2}
\end{eqnarray*}
and output
\begin{eqnarray*}
n(i) & = & \ell_{k}\quad=\quad N_{k}\ell_{k-1}\\
L_{i} & = & L_{n(i)}(Y).
\end{eqnarray*}

\end{description}
Realizing such an algorithm (which can simulate itself) is a non-trivial
but standard exercise in computation theory.

We can now sketch the remainder of the proof of theorem \ref{thm:multidim}.
Using $A$ as input to theorem \ref{thm:subdynamics} we obtain an
SFT $X\subseteq\Sigma^{\mathbb{Z}^{3}}$ and associated partition
$C=\{C_{0},C_{1},C_{2}\}$ of $\{0,1,2\}^{\mathbb{Z}^{3}}$, invariant
under $T_{2},T_{3}$, such that $X^{C}=Y$. Next, form the SFT $\widehat{X}$
as explained above, defined by a set $\widehat{L}$ of excluded patterns. 

For $\beta=2^{3\ell_{k}}$ let $\mu_{\beta\varphi}$ be an equilibrium measure
associated to the potential $\varphi_{\widehat{L}}$. By the definition
of equilibrium measures we have $\int\varphi_{\widehat{L}}d\mu_{\beta\varphi}>-c2^{-3\ell_{k}}$,
where $c=\log|\widehat{\Sigma}|$ is the maximal entropy achieved
by an invariant measures on the full shift $\widehat{\Sigma}^{\mathbb{Z}^{3}}$;
in section \ref{sec:Analysis} this constant was $1$. Thus in a $\mu_{\beta\varphi}$-typical
configuration the density of patterns from $\widehat{L}$ is $<c2^{-3\ell_{k}}$.
Hence for $r=\sqrt{\ell_{k}}$ and large enough $k$, with $\mu_{\beta\varphi}$-probability
$>1-2^{-2\ell_{k}}$ a configuration $x$ satisfies that $x|_{E_{r}}$
is globally admissible. By our choice of $\ell_{k}=n(2k)$ we have
$r\geq R(|A|,N_{1},\ldots,N_{k})$, so $x|_{E_{n(2k-1)}}$\emph{ }is
\emph{globally }admissible. But since $n(2k-1)\geq\ell'_{k}$, we
are in the situation described at the end of the previous subsection,
and this is enough to conclude that $\mu_{\beta\varphi}$ is mostly concentrated
on $\widehat{X}_{1}$ or $\widehat{X}_{2}$, depending on $k\bmod2$;
so $\mu_{\beta\varphi}$ diverges along $\beta=2^{-3\ell_{k}}$. 

\bibliographystyle{plain}
\bibliography{bib-3dec09}

\end{document}